\title{\Large Maximum Diminished Sombor Index of Molecular Trees with a Perfect Matching\,\thanks{The work was supported by the Doctoral Research Foundation of Xinjiang Normal University(No.XJNUZBS2446), the Natural Science Foundation of Xinjiang Uygur Autonomous Region under [Grant No. 2024D01B54] and the Tianchi Talent Training Program of Xinjiang Uygur Autonomous Region.}}
\author{ {Fei Guo$^a$, Fangxia Wang$^b$}\\
\small  $^a$College of Mathematics and System Sciences, Xinjiang
University \\ \small  Urumqi, Xinjiang 830046, P.R.China \\
\small $^b$ School of Mathematical Science, Xinjiang Normal University\\
\small Urumqi 830054, P.R.China\\}
\date{}
\newtheorem{theorem}{Theorem}[section]
\newtheorem{lemma}[theorem]{Lemma}
\begin{document}

\maketitle {\small \noindent{\bfseries Abstract}
The diminished Sombor index $(DSO)$ of a graph $G$, introduced by Rajathagiri, is defined as $$DSO(G)=\sum_{uv\in E}\frac{\sqrt{d_u^2+d_v^2}}{d_u+d_v},$$
where $d_u$ and $d_v$ are the degrees of vertices $u$ and $v$. A graph $G$ is a molecular graph if $d_G(u)\leq 4$ for all $u\in V(G)$. In this paper, we examine the chemical applicability of the $DSO$ index for predicting physicochemical properties of octane isomers. We also determine the maximum value of the diminished Sombor index among all molecular trees of order $n$ with perfect matching and characterize all the corresponding extremal trees. \\
{\bfseries Keywords:} Diminished Sombor index; Molecular graph; Perfect matching\\
{\bfseries Mathematics Subject Classification:} 05C05; 05C07

\section {\large Introduction}

A single number, representing a chemical structure in graph-theoretical terms via the molecular graph, is called a topological index if it correlates with a molecular property. Topological indices are used to understand physicochemical properties of chemical compounds. By a molecular graph we understand a simple graph, where the vertices represent the carbon atoms and its edges represent the carbon-carbon bonds.
The topological index, also known as a molecular descriptor, is crucial in determining this link with no experimental procedures. This index is highly valuable in QSPR analysis, which contributes to computer-assisted drug design \cite{ A.K2020, M.G2020}. Essentially, a topological index is a graph invariant that explains a molecular graph's structural features. Since Wiener's contribution \cite{H.Winener1947}, numerous types of indices have emerged in the literature relying on distinct parameters including degree, distance, eccentricity, spectrum, and so on.
Among various classes of topological indices, the vertex-degree-based topological indices are perhaps the most extensively studied (see \cite{I.Gutman2014, I.Gutman2018} and references therein).

All graphs considered in this paper are finite, simple and undirected. We refer to \cite{J.A.Bondy2008} for undefined notation and terminology. The vertex and edge sets of $G$ are denoted by $V(G)$ and $E(G)$, respectively. The {\it order} of $G$ (the number of its vertices) is $|V(G)|=n$ and its {\it size} of $G$ (the number of its edges) is $|E(G)|=m$. For a vertex $v\in V(G)$, the degree of $v$, denoted by $d_G(v)$, is the number of edges incident with $v$ in $G$. An edge connecting two adjacent vertices $u$ and $v$ is denoted as $uv\in E$. Let $e_{i,j}$ be the number of edges of i-degree vertices and j-degree vertices in $G$. Let $E_{i,j}$ be the set of edges of i-degree vertices and j-degree vertices in $G$. A vertex is said to a {\it pendant vertex} if its degree is one. Analogously, an edge is said to a {\it pendant edge} if it is incident with a pendant vertex.

For a vertex $v\in V(G)$, $G-v$ is the graph obtained from $G$ by deleting $v$ and its incident edges. If $S\subseteq E(G)$, we use $G-S$ to denote the graph formed from $G$ by removing the edges in $S$. Similarly, $G+S$ denotes the graph obtained from $G$ by adding the edges in $S$. In particular, if $S=\{uv\}$, then $G-S$ and $G+S$ are simply denoted by $G-uv$ and $G+uv$, respectively. If $M$ is a {\it matching}, the two ends of each edge of $M$ are said to be matched under $M$, and each vertex incident with an edge of $M$ is said to be covered by $M$. A {\it perfect matching} is one which covers every vertex of the graph.

Recall that an {\it acyclic graph} is one that contains no cycles. A connected acyclic graph is called a {\it tree}. A {\it molecular graph} is a graph with $d_G(u)\leq 4$ for all $u\in V(G)$. H. Y. Deng, Z. K. Tang and R. F. Wu \cite{H.Y.Deng2021} gave the sharp upper bound for the reduced Sombor index among all molecular trees of given order $n$. F. X. Wang and B. Wu \cite{F.X.Wang2022} gave the maximum value of the reduced Sombor index among all molecular trees of order $n$ with perfect matching and show that the maximum molecular trees of exponential reduced Sombor index. Du and Su \cite{J.W.Du2024} showed extremal results on bond incident degree indices of molecular trees with a fixed order and a fixed number of leaves. People may refer to \cite{S.Wagner2009, R.Cruz2021, F.X.Wang2024, A.Ali2022} for more relevant works.

Topological indices depending on end-vertex degrees of edges are called {\it vertex-degree-based topological indices} (VDB {\it topological indices} for short) \cite{Gutman2013}. During many years, scientists have been trying to improve the predictive power of the VDB topological index. The Sombor index and its variants belong to the class of vertex-degree-based (VDB) topological indices.
The Sombor index, a important index in chemical theory, is proposed in \cite{I.Gutman2021} and is defined as$$SO(G)=\sum_{uv\in E}\sqrt{d_u^2+d_v^2}.$$
Red\v{z}epovi\'{c} \cite{I2021} studied chemical applicability of Sombor index. Specifically, the Sombor index was used to model entropy and enthalpy of vaporization of alkanes with satisfactory prediction potential, indicating that this topological index may be used successfully on modeling thermodynamic properties of compounds.
The Sombor index has proven useful in QSPR and QSAR studies, contributing to its growing popularity in the literature \cite{I.Gutman2021, H.Liu2022, F.Movahedi2023}.
Recently, Rajathagiri introduced the Sombor index's variants in \cite{D.T.Rajathagiri2021}. F. Movahedi, I. Gutman, I. Red\u{z}epov\'{i}c and B. Furtula called the Sombor index's variants as Diminished Sombor index in \cite{F.Movahedi2026}. This index is defined as follows:$$DSO(G)=\sum_{uv\in E}\frac{\sqrt{d_u^2+d_v^2}}{d_u+d_v}.$$
 F. Movahedi, I. Gutman, I. Red\u{z}epov\'{i}c and B. Furtula \cite{F.Movahedi2026} derived bounds for the $DSO$ index and characterized the extremal graphs within the classes of trees, unicyclic graphs, and bicyclic graphs. They also presented numerical analyses regarding the structure-dependence of the $DSO$ index and its potential applications in chemistry. In \cite{F.Movahedi2025}, the tricyclic graph of a specified order that attains the maximum $DSO$ is identified, and its distinctive structural characteristics are examined. In \cite{F.Movahedi2025}, the relationships and inequalities between $DSO$ and some classical topological indices are analyzed thoroughly, with characterizations of extremal graphs achieving equality conditions. Motivated by known results, in this paper, we will characterize molecular trees with a perfect matching attaining the maximum $DSO$ index.

\section{Chemical applicability}
In order to evaluate the chemical relevance of the DSO index, we examined its statistical relationship with experimentally determined values for some physicochemical properties of octane isomers. The properties considered in this analysis are: density, melting point, boiling point, flash point, refractive index, critical volume, critical temperature, critical pressure, entropy. For each of the aforementioned properties, the Pearson correlation coefficient is computed to quantify the degree of linear association with the DSO index.

\begin{center}
\scalebox{0.4}{\includegraphics{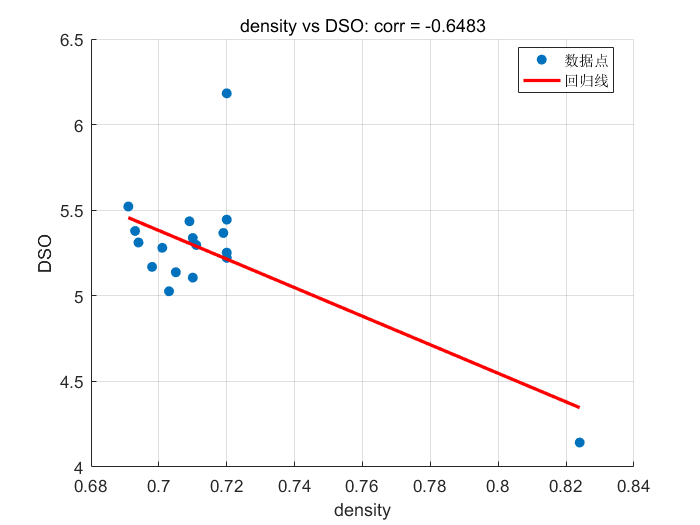}}\scalebox{0.4}{\includegraphics{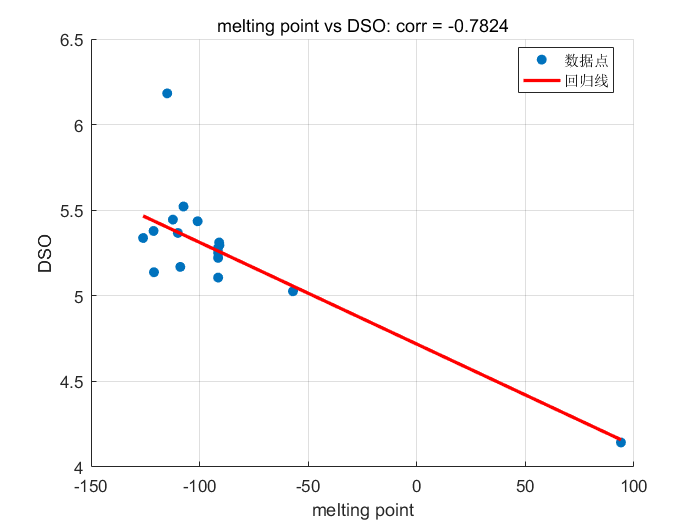}}\\
\scalebox{0.4}{\includegraphics{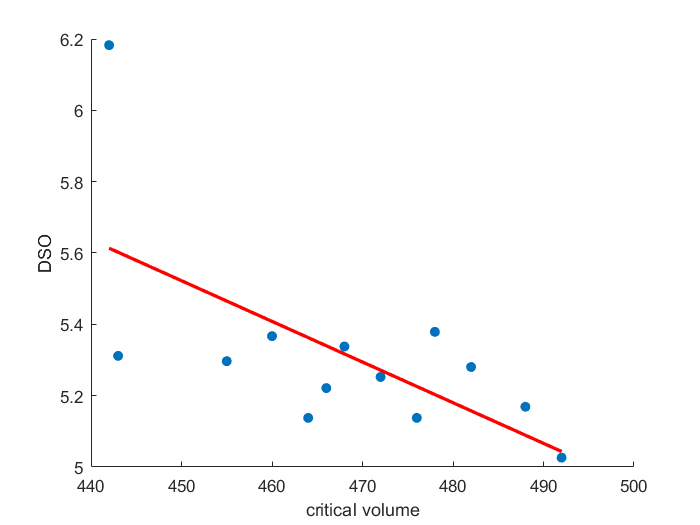}}\\
\vspace{0.1cm} \textbf{Figure 1.}
correlation between the DSO index and three properties.
\end{center}

In order to focus on the most statistically significant relationships, only those correlations with an absolute coefficient value $|corr| > 0.6$ were retained for further discussion. As illustrated in Figure 1, three properties satisfied this criterion: density, melting point and critical volume. These correlations suggest that the DSO index can predict some physicochemical properties of octane isomers.

\section{The maximum DSO index of Molecular Trees with a Perfect Matching}

In this section, we give the characterization for molecular trees which have maximum $DSO$ index among all molecular trees of order $n$ $(n\geq12)$ with a perfect matching. For molecular trees of $n\leq 10$ with a perfect matching, due to the order is small, it is easy to find the ones which has maximum $DSO$ index by simple calculation, see Figure 2.

\begin{center}
\scalebox{0.7}{\includegraphics{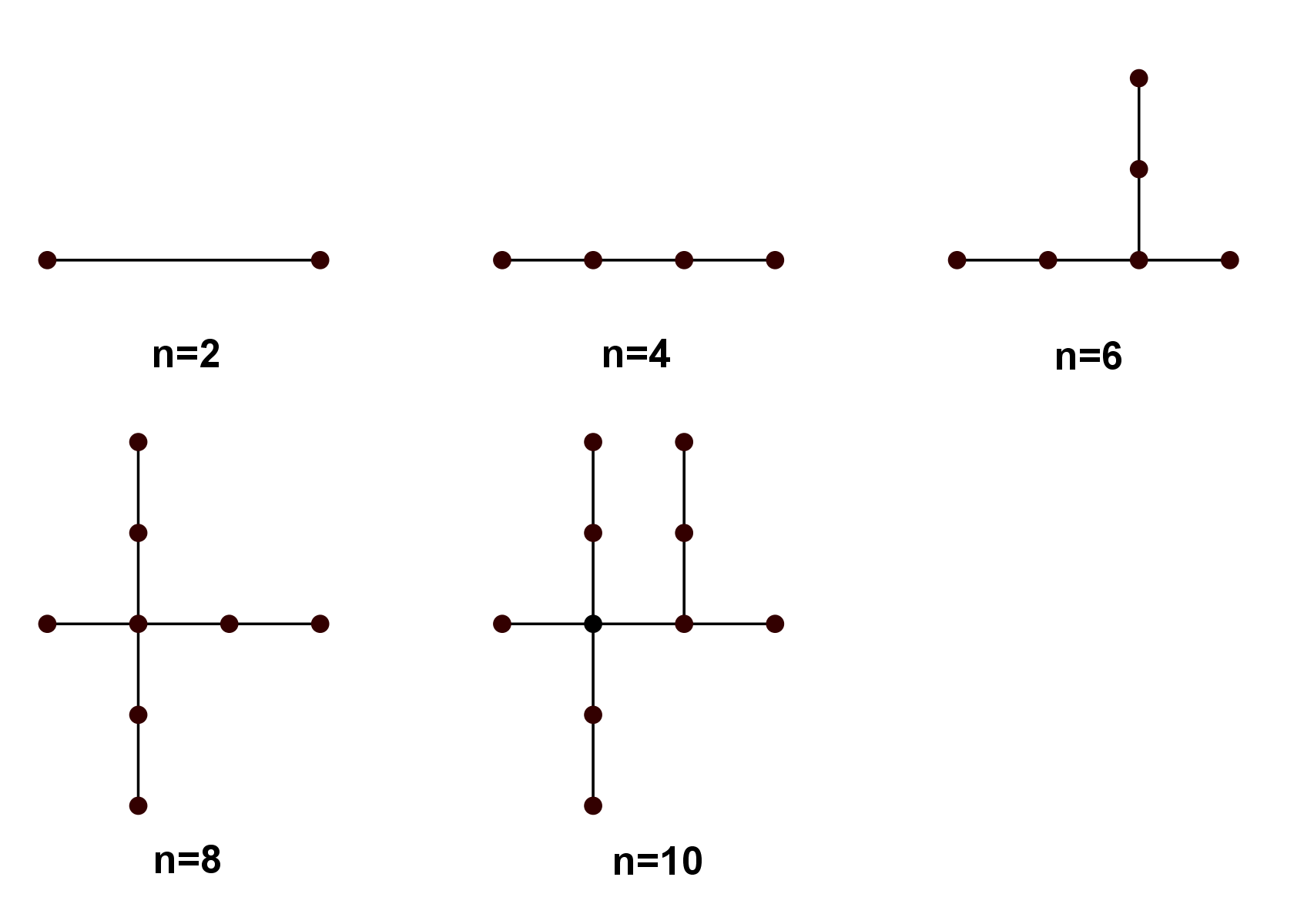}}\\
\vspace{0.1cm} \textbf{Figure 2.}
The corresponding molecular tree with the maximum $DSO$ index of order $n$ $(n\leq10)$.
\end{center}

Next, we give some useful lemmas and notations to describe properties of molecular trees which have maximum $DSO$ index among all molecular trees of order $n$ $(n\geq12)$ with a perfect matching.

\begin{lemma} \label{lem1}
Assume that a tree $T$ has the maximum $DSO$ index among all molecular trees of order $n$ $(n\geq12)$ with a perfect matching. Let $M$ be the perfect matching of $T$. If $e=uv\in M$, then $e$ is a pendant edge of $T$.
\end{lemma}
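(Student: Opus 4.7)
I argue by contradiction: assume some $e=uv\in M$ has $d_u\ge 2$ and $d_v\ge 2$, and exhibit a molecular tree $T'$ of order $n$ with a perfect matching and $DSO(T')>DSO(T)$, contradicting the maximality of $T$. Throughout I write $f(a,b)=\sqrt{a^2+b^2}/(a+b)$, so $DSO(G)=\sum_{xy\in E(G)}f(d_x,d_y)$, and assume without loss of generality that $d_u\le d_v$.

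A preliminary structural observation: every vertex $x\in N(u)\setminus\{v\}$ has $d_x\ge 2$, because $x$ must be $M$-matched but cannot be matched to $u$ (already matched to $v$), so $x$ has a neighbor other than $u$. The same holds for $N(v)\setminus\{u\}$. In particular I may pick $u_1\in N(u)\setminus\{v\}$ with $d_{u_1}\ge 2$ and $v_1\in N(v)\setminus\{u\}$ with $d_{v_1}\ge 2$, and let $w_1$ be the $M$-mate of $u_1$ (so $u_1w_1\in M$ and $w_1\ne u$).

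The primary transformation is the \emph{slide} $T_s:=T-uu_1+vu_1$. Since $uu_1\notin M$ (because $u$ is matched to $v$) and since $vu_1\notin E(T)$ (otherwise $u,v,u_1$ would form a cycle), removing $uu_1$ splits $T$ into two components that the new edge $vu_1$ rejoins; hence $T_s$ is a tree of order $n$. The matching $M$ is preserved verbatim, so $T_s$ has perfect matching $M$. It is molecular provided $d_v+1\le 4$, i.e.\ $d_v\le 3$. Computing
\[\Delta:=DSO(T_s)-DSO(T),\]
only the edges incident to $u$ and $v$, together with the relocated edge at $u_1$, contribute. A finite case analysis on $(d_u,d_v)\in\{(2,2),(2,3),(3,3)\}$, combined with the degrees of the other neighbors of $u$ and $v$ and elementary numerical inequalities such as $f(1,3)>f(2,2)$, yields $\Delta>0$ in every case. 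If $d_v=4$ but $d_u\le 3$, apply the symmetric slide $T-vv_1+uv_1$, which is molecular because $d_u+1\le 4$, and rerun the analogous analysis.

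The remaining case is $d_u=d_v=4$. Here $u_1$, its $M$-mate $w_1$, all other neighbors of $u$ and $v$, and their mates, are forced to be pairwise distinct by acyclicity, so the local picture involves many high-degree edges (and in particular $n\ge 14$, consistent with $n\ge 12$). To handle it I would use a two-edge surgery: within the subtree $T_1$ obtained by removing $uu_1$ and containing $u_1$, locate a pendant matching edge $\ell\ell'$ (which exists because any tree with a perfect matching has a leaf mated to its unique neighbor), then form $T'$ from $T$ by removing the pair $\{uu_1,\ell\ell'\}$ and inserting a pair of new edges that restores connectivity, preserves the matching, keeps every degree in $\{1,\dots,4\}$, and creates a pendant matching edge incident to $u$. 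The sign of $DSO(T')-DSO(T)$ again reduces to a finite comparison of $f$-values.

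\textbf{Main obstacle.} The hardest step is verifying $\Delta>0$ across all sub-cases, because $f(a,b)$ is not monotone in each coordinate (for example $f(3,3)<f(2,3)$). Consequently the direction of the slide, and the precise replacement edges in the $(4,4)$ surgery, must be chosen to match the local degree profile; organizing the case analysis so that the signed sum of $f$-differences is always strictly positive is the principal technical burden.
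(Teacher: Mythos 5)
Your preliminary observations are fine (every non-$v$ neighbor of $u$ has degree at least $2$, the slide preserves the tree structure and the matching), and the primary slide $T-uu_1+vu_1$ does work for $(d_u,d_v)\in\{(2,2),(2,3),(3,3)\}$ — though in the $(3,3)$ worst case the margin is only about $0.003$, so the "elementary numerical inequalities" really do need to be checked. The genuine gap is the case $d_v=4$, $d_u\le 3$, where you invoke the symmetric slide $T-vv_1+uv_1$ and assert the analogous analysis goes through. It does not. For $(d_u,d_v)=(2,4)$ the change is
\[
\Delta=\bigl[f(3,3)-f(2,4)\bigr]+\bigl[f(3,d_{u_1})-f(2,d_{u_1})\bigr]+\sum_{i=1}^{3}\bigl[f(3,d_{v_i})-f(4,d_{v_i})\bigr],
\]
and if the other neighbors $v_1,v_2,v_3$ of $v$ all have degree $2$ (a configuration fully consistent with a molecular tree with a perfect matching, e.g.\ each $v_i$ matched to a leaf), then $f(3,3)-f(2,4)=\tfrac{\sqrt2}{2}-\tfrac{\sqrt5}{3}<0$ and each $f(3,2)-f(4,2)=\tfrac{\sqrt{13}}{5}-\tfrac{\sqrt5}{3}<0$, while the $u_1$ term is at most $f(3,2)-f(2,2)\approx 0.014$; numerically $\Delta\approx-0.10<0$. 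The same failure occurs for $(3,4)$ (there $f(d_u,d_v)$ does not change at all and the losses on $v$'s degree-$2$ neighbors dominate). So for these cases no contradiction is obtained, and the $(4,4)$ case is only a sketch with no concrete surgery or sign verification.

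The underlying problem is exactly the non-monotonicity you flag at the end: since $f(a,b)$ depends only on the ratio $a/b$ and is minimized at $a=b$, piling degree onto $v$ (or shaving it off) mostly rebalances edges near $v$ and loses value. The paper avoids this by a different move: it walks a maximal path to a leaf $x$ whose neighbor $y$ has degree $2$ (forced by the perfect matching), and transplants the surplus neighbors of $u$ onto $y$ rather than onto $v$. The decisive gain is on the pendant edge $xy$, where $f(1,2)\to f(1,3)$ or $f(1,4)$ contributes $+0.045$ to $+0.079$, enough to absorb all the small losses; this is the idea your proposal is missing for $d_v=4$ and for the $(4,4)$ case.
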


\begin{proof}

Suppose the result is not true. It follows that there must exist a vertex $u\in V(T)$ with $d_T(u)\geq 2$ such that for any $v\in N_T(u)$, $d_T(v)\geq 2$. We consider the following cases.

\vspace{2mm}\noindent{\bf Case 1.} $d_T(u)=2.$

Let $v, u_1\in N_T(u)$ and $uv\in M.$ Let $P$ be a maximal path which starts from $u_1$ and contains $u_1u$. Without loss of generality, suppose $x$ is another end-point of $P$. Obviously, $x$ is a pendant vertex. Let $y$ be the neighbor of $x$. Since $T$ has a perfect matching, then $d_T(y)=2$. Let $z$ be the another neighbor of $y$. Next, we distinguish the following two subcases.

\vspace{2mm}\noindent{\bf Case 1.1.} $z=v$.

Let $T^\prime=T-uu_1+yu_1$. Clearly, $T^\prime$ is also a molecular tree of order $n$ with a perfect matching. In the following, we will obtain a contradiction by showing $DSO(T^\prime)>DSO(T)$. Therefore
 \begin{flalign}
	&DSO(T')-DSO(T)\nonumber \\
&=\frac{\sqrt{d_{T^\prime}^2(u_1)+d_{T^\prime}^2(y)}}
{d_{T^\prime}(u_1)+d_{T^\prime}(y)}
-\frac{\sqrt{d_T^2(u_1)+d_T^2(u)}}{d_T(u_1)+d_T(u)}
+\frac{\sqrt{d_{T^\prime}^2(y)+d_{T^\prime}^2(x)}}
{d_{T^\prime}(y)+d_{T^\prime}(x)}\nonumber\\
&~~~~-\frac{\sqrt{d_T^2(y)+d_T^2(x)}}{d_T(y)+d_T(x)}
+\frac{\sqrt{d_{T^\prime}^2(y)+d_{T^\prime}^2(v)}}
{d_{T^\prime}(y)+d_{T^\prime}(v)}
-\frac{\sqrt{d_T^2(y)+d_T^2(v)}}
{d_T(y)+d_T(v)}\nonumber \\
&~~~~+\frac{\sqrt{d_{T^\prime}^2(u)+d_{T^\prime}^2(v)}}
{d_{T^\prime}(u)+d_{T^\prime}(v)}
-\frac{\sqrt{d_T^2(u)+d_T^2(v)}}{d_T(u)+d_T(v)}\nonumber\\
&=\frac{\sqrt{d_T^2(u_1)+9}}{d_T(u_1)+3}
-\frac{\sqrt{d_T^2(u_1)+4}}{d_T(u_1)+2}+\frac{\sqrt{10}}{4}-\frac{\sqrt{5}}{3}
+\frac{\sqrt{d_T^2(v)+9}}{d_T(v)+3}\nonumber\\
&~~~~-\frac{\sqrt{d_T^2(v)+4}}{d_T(v)+2}+\frac{\sqrt{d_T^2(v)+1}}{d_T(v)+1}
-\frac{\sqrt{d_T^2(v)+4}}{d_T(v)+2}.
    \end{flalign}

Since $2\leq d_T(u_1)\leq4$ and $2\leq d_T(v)\leq4$ in the last equation of $(1)$, a simple calculation shows that the value of $(1)$ become minimum when $d_T(u_1)=d_T(v)=4$. Thus $DSO(T^\prime)-DSO(T)=\frac{10}{7}+\frac{\sqrt{10}}{4}-\frac{4\sqrt{5}}{3}+\frac{\sqrt{17}}{5}>0$,
contradicting the maximality of $T$.

\vspace{2mm}\noindent{\bf Case 1.2.} $z\neq v$.

Let $T^\prime=T-uu_1+yu_1$. Thus

       	\begin{flalign}
&DSO(T^\prime)-DSO(T)\nonumber\\
&=\frac{\sqrt{d_{T^\prime}^2(u_1)+d_{T^\prime}^2(y)}}
{d_{T^\prime}(u_1)+d_{T^\prime}(y)}
-\frac{\sqrt{d_T^2(u_1)+d_T^2(u)}}{d_T(u_1)+d_T(u)}
+\frac{\sqrt{d_{T^\prime}^2(y)+d_{T^\prime}^2(x)}}
{d_{T^\prime}(y)+d_{T^\prime}(x)}\nonumber\\
&~~~~-\frac{\sqrt{d_T^2(y)+d_T^2(x)}}{d_T(y)+d_T(x)}
+\frac{\sqrt{d_{T^\prime}^2(y)+d_{T^\prime}^2(z)}}
{d_{T^\prime}(y)+d_{T^\prime}(z)}
-\frac{\sqrt{d_T^2(y)+d_T^2(z)}}{d_T(y)+d_T(z)}\nonumber\\
&~~~~+\frac{\sqrt{d_{T^\prime}^2(u)+d_{T^\prime}^2(v)}}
{d_{T^\prime}(u)+d_{T^\prime}(v)}
-\frac{\sqrt{d_T^2(u)+d_T^2(v)}}{d_T(u)+d_T(v)}\nonumber\\
&=\frac{\sqrt{d_T^2(u_1)+9}}{d_T(u_1)+3}
-\frac{\sqrt{d_T^2(u_1)+4}}{d_T(u_1)+2}+\frac{\sqrt{10}}{4}-\frac{\sqrt{5}}{3}
+\frac{\sqrt{d_T^2(z)+9}}{d_T(z)+3}\nonumber\\
&~~~~-\frac{\sqrt{d_T^2(z)+4}}{d_T(z)+2}+\frac{\sqrt{d_T^2(v)+1}}{d_T(v)+1}
-\frac{\sqrt{d_T^2(v)+4}}{d_T(v)+2}.
\end{flalign}

Note that $2\leq d_T(u_1)\leq4$, $3\leq d_T(v)\leq4$ and $2\leq d_T(z)\leq4$. By checking through all possibilities, we can find the right side of $(2)$ achieve its minimum value at $d_T(u_1)=d_T(z)=4$ and $d_T(v)=2$. Thus $DSO(T^\prime)-DSO(T)=\frac{10}{7}+\frac{\sqrt{10}}{4}-\frac{2\sqrt{5}}{3}+\frac{\sqrt{2}}{2}>0$,
contradicting the maximality of $T$.

\vspace{2mm}\noindent{\bf Case 2.} $d_T(u)=3.$

Let $v, u_1, u_2\in N_T(u)$ and $uv\in M$. Let $d_T(v)\geq3$. Otherwise, by Case 1, $v$ is adjacent to a pendant vertex, a contradiction to $uv\in M.$ Let $P$ be a maximal path which starts from $u$ and contains $uv$. Without loss of generality, suppose $x$ is another end-point of $P$. Obviously, $x$ is a pendant vertex. Let $y$ be the neighbor of $x$. Since $T$ has a perfect matching, then $d_T(y)=2.$ Let $z$ be another neighbor of $y$. Next, we distinguish the following two subcases.

\vspace{2mm}\noindent{\bf Case 2.1.} $z=v.$

Let$T^\prime=T-\{uu_1,uu_2\}+\{yu_1,yu_2\}$. Since $d_T(v)\geq3$, we have

       	\begin{flalign}
&DSO(T^\prime)-DSO(T)\nonumber\\
&=\frac{\sqrt{d_{T^\prime}^2(u_1)+d_{T^\prime}^2(y)}}
{d_{T^\prime}(u_1)+d_{T^\prime}(y)}
-\frac{\sqrt{d_T^2(u_1)+d_T^2(u)}}{d_T(u_1)+d_T(u)}
+\frac{\sqrt{d_{T^\prime}^2(y)+d_{T^\prime}^2(u_2)}}
{d_{T^\prime}(y)+d_{T^\prime}(u_2)}\nonumber\\
&~~~~-\frac{\sqrt{d_T^2(u_2)+d_T^2(u)}}{d_T(u_2)+d_T(u)}
+\frac{\sqrt{d_{T^\prime}^2(u)+d_{T^\prime}^2(v)}}
{d_{T^\prime}(u)+d_{T^\prime}(v)}
-\frac{\sqrt{d_T^2(u)+d_T^2(v)}}{d_T(u)+d_T(v)}\nonumber \\
&~~~~+\frac{\sqrt{d_{T^\prime}^2(y)+d_{T^\prime}^2(x)}}
{d_{T^\prime}(y)+d_{T^\prime}(x)}
-\frac{\sqrt{d_T^2(y)+d_T^2(x)}}{d_T(y)+d_T(x)}
+\frac{\sqrt{d_{T^\prime}^2(v)+d_{T^\prime}^2(y)}}
{d_{T^\prime}(v)+d_{T^\prime}(y)}\nonumber\\
&~~~~-\frac{\sqrt{d_T^2(y)+d_T^2(v)}}{d_T(y)+d_T(v)}\nonumber\\
&=\frac{\sqrt{d_T^2(u_1)+16}}{d_T(u_1)+4}
-\frac{\sqrt{d_T^2(u_1)+9}}{d_T(u_1)+3}+\frac{\sqrt{17}}{5}-\frac{\sqrt{5}}{3}
+\frac{\sqrt{d_T^2(v)+16}}{d_T(v)+4}\nonumber\\
&~~~~-\frac{\sqrt{d_T^2(v)+4}}{d_T(v)+2}+\frac{\sqrt{d_T^2(v)+1}}{d_T(v)+1}
-\frac{\sqrt{d_T^2(v)+9}}{d_T(v)+3}+\frac{\sqrt{d_T^2(u_2)+16}}{d_T(u_2)+4}\nonumber\\
&~~~~-\frac{\sqrt{d_T^2(u_2)+9}}{d_T(u_2)+3}.
\end{flalign}

Note that $2\leq d_T(u_1)\leq4$, $3\leq d_T(v)\leq4$ and $2\leq d_T(u_2)\leq4$. By checking through all possibilities, we can find the right side of $(3)$ achieve its minimum value at $d_T(u_1)=d_T(u_2)=4$ and $d_T(v)=2$. Thus $DSO(T^\prime)-DSO(T)=\frac{3\sqrt{2}}{2}+\frac{\sqrt{17}}{5}
-\frac{10}{7}-\frac{\sqrt{5}}{3}-\frac{\sqrt{13}}{5}>0$,
contradicting the maximality of $T$.

\vspace{2mm}\noindent{\bf Case 2.2.} $z\neq v.$

Let $T^\prime=T-\{uu_1,uu_2\}+\{yu_1,yu_2\}$. Since $d_T(v)\geq3$, we have

       	\begin{flalign}
&DSO(T^\prime)-DSO(T)\nonumber\\
&=\frac{\sqrt{d_{T^\prime}^2(u_1)+d_{T^\prime}^2(y)}}
{d_{T^\prime}(u_1)+d_{T^\prime}(y)}
-\frac{\sqrt{d_T^2(u_1)+d_T^2(u)}}{d_T(u_1)+d_T(u)}
+\frac{\sqrt{d_{T^\prime}^2(y)+d_{T^\prime}^2(u_2)}}
{d_{T^\prime}(y)+d_{T^\prime}(u_2)}\nonumber\\
&~~~~-\frac{\sqrt{d_T^2(u)+d_T^2(u_2)}}{d_T(u)+d_T(u_2)}
+\frac{\sqrt{d_{T^\prime}^2(u)+d_{T^\prime}^2(v)}}
{d_{T^\prime}(u)+d_{T^\prime}(v)}
-\frac{\sqrt{d_T^2(u)+d_T^2(v)}}{d_T(u)+d_T(v)}\nonumber\\
&~~~~+\frac{\sqrt{d_{T^\prime}^2(y)+d_{T^\prime}^2(x)}}
{d_{T^\prime}(y)+d_{T^\prime}(x)}
-\frac{\sqrt{d_T^2(y)+d_T^2(x)}}{d_T(y)+d_T(x)}
+\frac{\sqrt{d_{T^\prime}^2(y)+d_{T^\prime}^2(z)}}
{d_{T^\prime}(y)+d_{T^\prime}(z)}\nonumber\\
&~~~~-\frac{\sqrt{d_T^2(y)+d_T^2(z)}}{d_T(y)+d_T(z)}\nonumber\\
&=\frac{\sqrt{d_T^2(u_1)+16}}{d_T(u_1)+4}
-\frac{\sqrt{d_T^2(u_1)+9}}{d_T(u_1)+3}+\frac{\sqrt{d_T^2(u_2)+16}}{d_T(u_2)+4}
-\frac{\sqrt{d_T^2(u_2)+9}}{d_T(u_2)+3}\nonumber\\
&~~~~+\frac{\sqrt{17}}{5}-\frac{\sqrt{5}}{3}
+\frac{\sqrt{d_T^2(z)+16}}{d_T(z)+4}-\frac{\sqrt{d_T^2(z)+4}}{d_T(z)+2}
+\frac{\sqrt{d_T^2(v)+1}}{d_T(v)+1}\nonumber\\
&~~~~-\frac{\sqrt{d_T^2(v)+9}}{d_T(v)+3}.
\end{flalign}

Note that $2\leq d_T(u_1)\leq4$, $3\leq d_T(v)\leq4$, $2\leq d_T(u_2)\leq4$ and $2\leq d_T(z)\leq4$. By checking through all possibilities, we can find the right side of $(4)$ achieve its minimum value at $d_T(u_1)=d_T(u_2)=d_T(z)=4$ and $d_T(v)=3$. Thus $DSO(T^\prime)-DSO(T)=\frac{\sqrt{10}+4\sqrt{2}}{4}-\frac{10}{7}
+\frac{4\sqrt{17}}{5}-\frac{2\sqrt{5}}{3}>0$,
contradicting the maximality of $T$.

\vspace{2mm}\noindent{\bf Case 3.} $d_T(u)=4.$

Let $v, u_1, u_2, u_3\in N_T(u)$ and $uv\in M.$ Let $d_T(v)=4$. Otherwise, by Case 2, $v$ is adjacent to a pendant vertex, a contradiction to $uv\in M.$ Let $P$ be a maximal path which starts from $u$ and contains $uv.$ Without loss of generality, suppose $x_1$ is another end-point of $P$. Obviously, $x_1$ is a pendant vertex. Let $y_1$ be the neighbor of $x_1.$ Since $T$ has a perfect matching, we have $d_T(y_1)=2.$ Let $z_1$ be another neighbor of $y_1.$ Let $P_1$ be a maximal path which starts from $u$ and contains $uu_1.$ Without loss of generality, suppose $x_2$ is another end-point of $P_1$. Obviously, $x_2$ is a pendant vertex. Let $y_2$ be the neighbor of $x_2.$ Since $T$ has a perfect matching, we have $d_T(y_2)=2.$ Let $z_2$ be another neighbor of $y_2.$ Next, we distinguish the following six subcases.

\vspace{2mm}\noindent{\bf Case 3.1.} $z_1=v, y_2=u_1.$

Let $T^\prime=T-\{uu_1,uu_2,uu_3\}+\{y_1u_1,y_1u_3,u_1u_3\}$. Since $d_T(u_1)=d_T(y_2)=2$ and $d_T(v)=4$ , we have

       	\begin{flalign}
    &DSO(T^\prime)-DSO(T)\nonumber\\
    &=\frac{\sqrt{d_{T^\prime}^2(u_1)+d_{T^\prime}^2(y_1)}}
{d_{T^\prime}(u_1)+d_{T^\prime}(y_1)}
-\frac{\sqrt{d_T^2(u_1)+d_T^2(u)}}{d_T(u_1)+d_T(u)}
+\frac{\sqrt{d_{T^\prime}^2(u_2)+d_{T^\prime}^2(u_1)}}
{d_{T^\prime}(u_2)+d_{T^\prime}(u_1)}\nonumber\\
&~~~~-\frac{\sqrt{d_T^2(u)+d_T^2(u_2)}}{d_T(u)+d_T(u_2)}
+\frac{\sqrt{d_{T^\prime}^2(y_1)+d_{T^\prime}^2(u_3)}}
{d_{T^\prime}(y_1)+d_{T^\prime}(u_3)}
-\frac{\sqrt{d_T^2(u_3)+d_T^2(u)}}{d_T(u_3)+d_T(u)}\nonumber \\
&~~~~+\frac{\sqrt{d_{T^\prime}^2(u_1)+d_{T^\prime}^2(x_2)}}
{d_{T^\prime}(u_1)+d_{T^\prime}(x_2)}
-\frac{\sqrt{d_T^2(u_1)+d_T^2(x_2)}}{d_T(u_1)+d_T(x_2)}
+\frac{\sqrt{d_{T^\prime}^2(u)+d_{T^\prime}^2(v)}}
{d_{T^\prime}(u)+d_{T^\prime}(v)}\nonumber\\
&~~~~-\frac{\sqrt{d_T^2(u)+d_T^2(v)}}{d_T(u)+d_T(v)}+\frac{\sqrt{d_{T^\prime}^2(y_1)+d_{T^\prime}^2(x_1)}}
{d_{T^\prime}(y_1)+d_{T^\prime}(x_1)}-\frac{\sqrt{d_T^2(y_1)+d_T^2(x_1)}}{d_T(y_1)+d_T(x_1)}\nonumber\\
&~~~~+\frac{\sqrt{d_{T^\prime}^2(v)+d_{T^\prime}^2(y_1)}}
{d_{T^\prime}(v)+d_{T^\prime}(y_1)}
-\frac{\sqrt{d_T^2(v)+d_T^2(y_1)}}{d_T(v)+d_T(y_1)}\nonumber\\
&=\frac{\sqrt{d_T^2(u_2)+9}}{d_T(u_2)+3}
-\frac{\sqrt{d_T^2(u_2)+16}}{d_T(u_2)+4}+\frac{\sqrt{10}}{4}
-\sqrt{5}+\frac{\sqrt{13}}{5}\nonumber\\
&~~~-\frac{\sqrt{2}}{2}
+\frac{2\sqrt{17}}{5}.
\end{flalign}

Since $2\leq d_T(u_2)\leq 4$, by taking each possible value 2, 3, 4 into the last equation of (5), we find that the right side of $(5)$ achieve its minimum at $d_T(u_2)=2$. Thus $DSO(T^\prime)-DSO(T)=\frac{\sqrt{13}+2\sqrt{17}}{5}-\frac{5\sqrt{5}}{3}
+\frac{\sqrt{10}}{4}+\frac{5}{7}>0$,
contradicting the maximality of $T$.

\vspace{2mm}\noindent{\bf Case 3.2.} $z_1=v, z_2=v.$

Let $T^\prime=T-\{uu_1,uu_2,uu_3\}+\{y_1u_1,y_1u_3,y_2u_2\}$. Since $d_T(v)=4$ , we have

       	\begin{flalign}
&DSO(T^\prime)-DSO(T)\nonumber\\
&=\frac{\sqrt{d_{T^\prime}^2(y_2)+d_{T^\prime}^2(u_2)}}
{d_{T^\prime}(y_2)+d_{T^\prime}(u_2)}
-\frac{\sqrt{d_T^2(u_2)+d_T^2(u)}}{d_T(u_2)+d_T(u)}
+\frac{\sqrt{d_{T^\prime}^2(y_1)+d_{T^\prime}^2(u_1)}}
{d_{T^\prime}(y_1)+d_{T^\prime}(u_1)}\nonumber\\
&~~~~-\frac{\sqrt{d_T^2(u)+d_T^2(u_1)}}{d_T(u)+d_T(u_1)}
+\frac{\sqrt{d_{T^\prime}^2(y_1)+d_{T^\prime}^2(u_3)}}
{d_{T^\prime}(y_1)+d_{T^\prime}(u_3)}
-\frac{\sqrt{d_T^2(u_3)+d_T^2(u)}}{d_T(u_3)+d_T(u)}\nonumber \\
&~~~~+\frac{\sqrt{d_{T^\prime}^2(y_2)+d_{T^\prime}^2(x_2)}}
{d_{T^\prime}(y_2)+d_{T^\prime}(x_2)}
-\frac{\sqrt{d_T^2(y_2)+d_T^2(x_2)}}{d_T(y_2)+d_T(x_2)}
+\frac{\sqrt{d_{T^\prime}^2(u_1)+d_{T^\prime}^2(y_2)}}
{d_{T^\prime}(u_1)+d_{T^\prime}(y_2)}\nonumber\\
&~~~~-\frac{\sqrt{d_T^2(u_1)+d_T^2(y_2)}}{d_T(u_1)+d_T(y_2)}
+\frac{\sqrt{d_{T^\prime}^2(y_1)+d_{T^\prime}^2(x_1)}}
{d_{T^\prime}(y_1)+d_{T^\prime}(x_1)}
-\frac{\sqrt{d_T^2(y_1)+d_T^2(x_1)}}{d_T(y_1)+d_T(x_1)}\nonumber\\
&~~~~+\frac{\sqrt{d_{T^\prime}^2(v)+d_{T^\prime}^2(y_1)}}
{d_{T^\prime}(v)+d_{T^\prime}(y_1)}
-\frac{\sqrt{d_T^2(v)+d_T^2(y_1)}}{d_T(v)+d_T(y_1)}
+\frac{\sqrt{d_{T^\prime}^2(v)+d_{T^\prime}^2(u)}}
{d_{T^\prime}(v)+d_{T^\prime}(u)}\nonumber\\
&~~~~-\frac{\sqrt{d_T^2(v)+d_T^2(u)}}{d_T(v)+d_T(u)}\nonumber\\
&=\frac{\sqrt{d_T^2(u_2)+9}}{d_T(u_2)+3}
-\frac{\sqrt{d_T^2(u_2)+16}}{d_T(u_2)+4}+\frac{\sqrt{d_T^2(u_1)+9}}{d_T(u_1)+3}
-\frac{\sqrt{d_T^2(u_1)+4}}{d_T(u_1)+2}\nonumber\\
&~~~~+\frac{\sqrt{10}}{4}+\frac{2\sqrt{17}}{5}-\sqrt{5}.
\end{flalign}

Since $2\leq d_T(u_2)\leq4$ and $2\leq d_T(u_1)\leq4$ in the last equation of $(6)$, a simple calculation shows that the value of $(6)$ become minimum when $d_T(u_1)=4$ and $d_T(u_2)=2$.
Thus $DSO(T^\prime)-DSO(T)=\frac{\sqrt{13}+2\sqrt{17}}{5}-\frac{5\sqrt{5}}{3}
+\frac{\sqrt{10}}{4}+\frac{5}{7}>0$,
contradicting the maximality of $T$.

\vspace{2mm}\noindent{\bf Case 3.3.} $z_1=v_2, z_2\neq v.$

Let $T^\prime=T-\{uv,uv_3,uv_4\}+\{y_1v,y_1v_4,y_2v_3\}$. Since $d_T(v_2)=4$ , we have

       	\begin{flalign}
&DSO(T^\prime)-DSO(T)\nonumber\\
&=\frac{\sqrt{d_{T^\prime}^2(y_2)+d_{T^\prime}^2(u_2)}}
{d_{T^\prime}(y_2)+d_{T^\prime}(u_2)}
-\frac{\sqrt{d_T^2(u_2)+d_T^2(u)}}{d_T(u_2)+d_T(u)}
+\frac{\sqrt{d_{T^\prime}^2(y_1)+d_{T^\prime}^2(u_1)}}
{d_{T^\prime}(y_1)+d_{T^\prime}(u_1)}\nonumber\\
&~~~~-\frac{\sqrt{d_T^2(u)+d_T^2(u_1)}}{d_T(u)+d_T(u_1)}
+\frac{\sqrt{d_{T^\prime}^2(y_1)+d_{T^\prime}^2(u_3)}}
{d_{T^\prime}(y_1)+d_{T^\prime}(u_3)}
-\frac{\sqrt{d_T^2(u_3)+d_T^2(u)}}{d_T(u_3)+d_T(u)}\nonumber \\
&~~~~+\frac{\sqrt{d_{T^\prime}^2(y_2)+d_{T^\prime}^2(x_2)}}
{d_{T^\prime}(y_2)+d_{T^\prime}(x_2)}
-\frac{\sqrt{d_T^2(y_2)+d_T^2(x_2)}}{d_T(y_2)+d_T(x_2)}
+\frac{\sqrt{d_{T^\prime}^2(z_2)+d_{T^\prime}^2(y_2)}}
{d_{T^\prime}(z_2)+d_{T^\prime}(y_2)}\nonumber\\
&~~~~-\frac{\sqrt{d_T^2(y_2)+d_T^2(z_2)}}{d_T(y_2)+d_T(z_2)}
+\frac{\sqrt{d_{T^\prime}^2(x_1)+d_{T^\prime}^2(y_1)}}
{d_{T^\prime}(x_1)+d_{T^\prime}(y_1)}
-\frac{\sqrt{d_T^2(x_1)+d_T^2(y_1)}}{d_T(x_1)+d_T(y_1)}\nonumber\\
&~~~~+\frac{\sqrt{d_{T^\prime}^2(y_1)+d_{T^\prime}^2(v)}}
{d_{T^\prime}(y_1)+d_{T^\prime}(v)}
-\frac{\sqrt{d_T^2(y_1)+d_T^2(v)}}{d_T(y_1)+d_T(v)}
+\frac{\sqrt{d_{T^\prime}^2(u)+d_{T^\prime}^2(v)}}
{d_{T^\prime}(u)+d_{T^\prime}(v)}\nonumber\\
&~~~~-\frac{\sqrt{d_T^2(u)+d_T^2(v)}}{d_T(u)+d_T(v)}
+\frac{\sqrt{d_{T^\prime}^2(u_1)+d_{T^\prime}^2(z_2)}}
{d_{T^\prime}(u_1)+d_{T^\prime}(z_2)}
-\frac{\sqrt{d_T^2(u_1)+d_T^2(z_2)}}{d_T(u_1)+d_T(z_2)}\nonumber\\
&=\frac{\sqrt{d_T^2(u_2)+9}}{d_T(u_2)+3}
-\frac{\sqrt{d_T^2(u_2)+16}}{d_T(u_2)+4}+\frac{\sqrt{d_T^2(z_2)+9}}{d_T(z_2)+3}
-\frac{\sqrt{d_T^2(z_2)+4}}{d_T(z_2)+2}\nonumber\\
&~~~~+\frac{\sqrt{10}}{4}+\frac{2\sqrt{17}}{5}-\sqrt{5}.
\end{flalign}

Note that $2\leq d_T(u_2)\leq4$ and $3\leq d_T(z_2)\leq4$. By checking through all possibilities, we can find the right side of $(7)$ achieve its minimum value at $d_T(u_2)=2$ and $d_T(z_2)=4$. Thus
Thus $DSO(T^\prime)-DSO(T)=\frac{\sqrt{13}+2\sqrt{17}}{5}-\frac{5\sqrt{5}}{3}
+\frac{\sqrt{10}}{4}+\frac{5}{7}>0$,
contradicting the maximality of $T$.

\vspace{2mm}\noindent{\bf Case 3.4.} $z_1\neq v, y_2=u_1.$

Let $T^\prime=T-\{uu_1,uu_2,uu_3\}+\{y_1u_1,y_1u_3,u_1u_2\}$. Since $d_T(u_1)=d_T(y_2)=2$ and $d_T(v)=4$ , we have

       	\begin{flalign}
&DSO(T^\prime)-DSO(T)\nonumber\\
&=\frac{\sqrt{d_{T^\prime}^2(u_1)+d_{T^\prime}^2(u_2)}}
{d_{T^\prime}(u_1)+d_{T^\prime}(u_2)}
-\frac{\sqrt{d_T^2(u_2)+d_T^2(u)}}{d_T(u_2)+d_T(u)}
+\frac{\sqrt{d_{T^\prime}^2(u_1)+d_{T^\prime}^2(y_1)}}
{d_{T^\prime}(u_1)+d_{T^\prime}(y_1)}\nonumber\\
&~~~~-\frac{\sqrt{d_T^2(u)+d_T^2(u_1)}}{d_T(u)+d_T(u_1)}
+\frac{\sqrt{d_{T^\prime}^2(u_3)+d_{T^\prime}^2(y_1)}}
{d_{T^\prime}(y_1)+d_{T^\prime}(u_3)}
-\frac{\sqrt{d_T^2(u)+d_T^2(u_3)}}{d_T(u)+d_T(u_3)}\nonumber \\
&~~~~+\frac{\sqrt{d_{T^\prime}^2(u_1)+d_{T^\prime}^2(x_2)}}
{d_{T^\prime}(u_1)+d_{T^\prime}(x_2)}
-\frac{\sqrt{d_T^2(u_1)+d_T^2(x_2)}}{d_T(u_1)+d_T(x_2)}
+\frac{\sqrt{d_{T^\prime}^2(x_1)+d_{T^\prime}^2(y_1)}}
{d_{T^\prime}(x_1)+d_{T^\prime}(y_1)}\nonumber\\
&~~~~-\frac{\sqrt{d_T^2(x_1)+d_T^2(y_1)}}{d_T(x_1)+d_T(y_1)}
+\frac{\sqrt{d_{T^\prime}^2(y_1)+d_{T^\prime}^2(z_1)}}
{d_{T^\prime}(y_1)+d_{T^\prime}(z_1)}
-\frac{\sqrt{d_T^2(y_1)+d_T^2(z_1)}}{d_T(y_1)+d_T(z_1)}\nonumber\\
&~~~~+\frac{\sqrt{d_{T^\prime}^2(v)+d_{T^\prime}^2(u)}}
{d_{T^\prime}(v)+d_{T^\prime}(u)}
-\frac{\sqrt{d_T^2(u)+d_T^2(v)}}{d_T(u)+d_T(v)}\nonumber\\
&=\frac{\sqrt{d_T^2(u_2)+9}}{d_T(u_2)+3}
-\frac{\sqrt{d_T^2(u_2)+16}}{d_T(u_2)+4}
+\frac{\sqrt{d_T^2(z_1)+16}}{d_T(z_1)+4}
-\frac{\sqrt{d_T^2(z_1)+4}}{d_T(z_1)+2}\nonumber\\
&~~~~+\frac{5}{7}-\sqrt{5}
+\frac{\sqrt{10}}{4}-\frac{2\sqrt{17}}{5}
-\frac{\sqrt{2}}{2}.
\end{flalign}

Note that $2\leq d_T(u_2)\leq4$ and $2\leq d_T(z_1)\leq4$. By checking through all possibilities, we can find the right side of $(8)$ achieve its minimum value at $d_T(u_2)=2$ and $d_T(z_1)=4$.
Thus $DSO(T^\prime)-DSO(T)=\frac{\sqrt{13}+2\sqrt{17}}{5}-\frac{5\sqrt{5}}{3}
+\frac{\sqrt{10}}{4}+\frac{5}{7}>0$,
contradicting the maximality of $T$.

\vspace{2mm}\noindent{\bf Case 3.5.} $z_1\neq v, z_2=u_1.$

Let $T^\prime=T-\{uu_1,uu_2,uu_3\}+\{y_1u_1,y_1u_3,y_2u_2\}$. Since $d_T(v)=4$ , we have

       	\begin{flalign}
&DSO(T^\prime)-DSO(T)\nonumber\\
&=\frac{\sqrt{d_{T^\prime}^2(u_1)+d_{T^\prime}^2(y_1)}}
{d_{T^\prime}(u_1)+d_{T^\prime}(y_1)}
-\frac{\sqrt{d_T^2(u_1)+d_T^2(u)}}{d_T(u_1)+d_T(u)}
+\frac{\sqrt{d_{T^\prime}^2(y_2)+d_{T^\prime}^2(u_2)}}
{d_{T^\prime}(y_2)+d_{T^\prime}(u_2)}\nonumber\\
&~~~~-\frac{\sqrt{d_T^2(u)+d_T^2(u_2)}}{d_T(u)+d_T(u_2)}
+\frac{\sqrt{d_{T^\prime}^2(y_1)+d_{T^\prime}^2(u_3)}}
{d_{T^\prime}(y_1)+d_{T^\prime}(u_3)}
-\frac{\sqrt{d_T^2(u_3)+d_T^2(u)}}{d_T(u_3)+d_T(u)}\nonumber \\
&~~~~+\frac{\sqrt{d_{T^\prime}^2(y_2)+d_{T^\prime}^2(x_2)}}
{d_{T^\prime}(y_2)+d_{T^\prime}(x_2)}
-\frac{\sqrt{d_T^2(y_2)+d_T^2(x_2)}}{d_T(y_2)+d_T(x_2)}
+\frac{\sqrt{d_{T^\prime}^2(u_1)+d_{T^\prime}^2(y_2)}}
{d_{T^\prime}(u_1)+d_{T^\prime}(y_2)}\nonumber\\
&~~~~-\frac{\sqrt{d_T^2(u_1)+d_T^2(y_2)}}{d_T(u_1)+d_T(y_2)}
+\frac{\sqrt{d_{T^\prime}^2(y_1)+d_{T^\prime}^2(x_1)}}
{d_{T^\prime}(y_1)+d_{T^\prime}(x_1)}
-\frac{\sqrt{d_T^2(y_1)+d_T^2(x_1)}}{d_T(y_1)+d_T(x_1)}\nonumber\\
&~~~~+\frac{\sqrt{d_{T^\prime}^2(y_1)+d_{T^\prime}^2(z_1)}}
{d_{T^\prime}(y_1)+d_{T^\prime}(z_1)}
-\frac{\sqrt{d_T^2(y_1)+d_T^2(z_1)}}{d_T(y_1)+d_T(z_1)}
+\frac{\sqrt{d_{T^\prime}^2(v)+d_{T^\prime}^2(u)}}
{d_{T^\prime}(v)+d_{T^\prime}(u)}\nonumber\\
&~~~~-\frac{\sqrt{d_T^2(v)+d_T^2(u)}}{d_T(v)+d_T(u)}\nonumber\\
&=\frac{\sqrt{d_T^2(u_2)+9}}{d_T(u_2)+3}
-\frac{\sqrt{d_T^2(u_2)+16}}{d_T(u_2)+4}
+\frac{\sqrt{d_T^2(u_1)+9}}{d_T(u_1)+3}+\frac{\sqrt{10}}{4}
+\frac{2\sqrt{17}}{5}\nonumber\\
&~~~~-\frac{\sqrt{d_T^2(u_1)+4}}{d_T(u_1)+2}
+\frac{\sqrt{d_T^2(z_1)+16}}{d_T(z_1)+4}
-\frac{\sqrt{d_T^2(z_1)+4}}{d_T(z_1)+2}-\frac{2\sqrt{5}}{3}
-\frac{\sqrt{2}}{2}.
\end{flalign}

Since $2\leq d_T(u_2)\leq4$, $2\leq d_T(u_1)\leq4$ and $2\leq d_T(z_1)\leq4$ in the last equation of $(9)$, a simple calculation shows that the value of $(9)$ become minimum when $d_T(u_1)=d_T(z_1)=4$ and $d_T(u_2)=2$.
Thus $DSO(T^\prime)-DSO(T)=\frac{\sqrt{13}+2\sqrt{17}}{5}-\frac{5\sqrt{5}}{3}
+\frac{\sqrt{10}}{4}+\frac{5}{7}>0$,
contradicting the maximality of $T$.

\vspace{2mm}\noindent{\bf Case 3.6.} $z_1\neq v, z_2\neq u_1.$

Let $T^\prime=T-\{uu_1,uu_2,uu_3\}+\{y_1u_1,y_1u_3,y_2u_2\}$. Since $d_T(v_2)=4$ , we have

       	\begin{flalign}
&DSO(T^\prime)-DSO(T)\nonumber\\
&=\frac{\sqrt{d_{T^\prime}^2(u_2)+d_{T^\prime}^2(y_2)}}
{d_{T^\prime}(u_2)+d_{T^\prime}(y_2)}
-\frac{\sqrt{d_T^2(u_2)+d_T^2(u)}}{d_T(u_2)+d_T(u)}
+\frac{\sqrt{d_{T^\prime}^2(y_1)+d_{T^\prime}^2(u_1)}}
{d_{T^\prime}(y_1)+d_{T^\prime}(u_1)}\nonumber\\
&~~~~-\frac{\sqrt{d_T^2(u)+d_T^2(u_1)}}{d_T(u)+d_T(u_1)}
+\frac{\sqrt{d_{T^\prime}^2(y_1)+d_{T^\prime}^2(u_3)}}
{d_{T^\prime}(y_1)+d_{T^\prime}(u_3)}
-\frac{\sqrt{d_T^2(u_3)+d_T^2(u)}}{d_T(u_3)+d_T(u)}\nonumber \\
&~~~~+\frac{\sqrt{d_{T^\prime}^2(y_2)+d_{T^\prime}^2(x_2)}}
{d_{T^\prime}(y_2)+d_{T^\prime}(x_2)}
-\frac{\sqrt{d_T^2(y_2)+d_T^2(x_2)}}{d_T(y_2)+d_T(x_2)}
+\frac{\sqrt{d_{T^\prime}^2(z_2)+d_{T^\prime}^2(y_2)}}
{d_{T^\prime}(z_2)+d_{T^\prime}(y_2)}\nonumber\\
&~~~~-\frac{\sqrt{d_T^2(z_2)+d_T^2(y_2)}}{d_T(z_2)+d_T(y_2)}
+\frac{\sqrt{d_{T^\prime}^2(y_1)+d_{T^\prime}^2(x_1)}}
{d_{T^\prime}(y_1)+d_{T^\prime}(x_1)}
-\frac{\sqrt{d_T^2(y_1)+d_T^2(x_1)}}{d_T(y_1)+d_T(x_1)}\nonumber\\
&~~~~+\frac{\sqrt{d_{T^\prime}^2(y_1)+d_{T^\prime}^2(z_1)}}
{d_{T^\prime}(y_1)+d_{T^\prime}(z_1)}
-\frac{\sqrt{d_T^2(y_1)+d_T^2(z_1)}}{d_T(y_1)+d_T(z_1)}
+\frac{\sqrt{d_{T^\prime}^2(v)+d_{T^\prime}^2(u)}}
{d_{T^\prime}(v)+d_{T^\prime}(u)}\nonumber\\
&~~~~-\frac{\sqrt{d_T^2(v)+d_T^2(u)}}{d_T(v)+d_T(u)}\nonumber\\
&=\frac{\sqrt{d_T^2(u_2)+9}}{d_T(u_2)+3}
-\frac{\sqrt{d_T^2(u_2)+16}}{d_T(u_2)+4}
+\frac{\sqrt{d_T^2(z_2)+9}}{d_T(z_2)+3}+\frac{\sqrt{10}}{4}
+\frac{2\sqrt{17}}{5}\nonumber\\
&~~~~-\frac{\sqrt{d_T^2(z_2)+4}}{d_T(z_2)+2}
+\frac{\sqrt{d_T^2(z_1)+16}}{d_T(z_1)+4}
-\frac{\sqrt{d_T^2(z_1)+4}}{d_T(z_1)+2}-\frac{2\sqrt{5}}{3}
-\frac{\sqrt{2}}{2}.
\end{flalign}

Note that $2\leq d_T(u_2)\leq4$, $2\leq d_T(z_1)\leq4$ and $2\leq d_T(u_2)\leq4$. By checking through all possibilities, we can find the right side of $(10)$ achieve its minimum value at $d_T(z_2)=d_T(z_1)=4$ and $d_T(u_2)=2$.
Thus $DSO(T^\prime)-DSO(T)=\frac{\sqrt{13}+2\sqrt{17}}{5}-\frac{5\sqrt{5}}{3}
+\frac{\sqrt{10}}{4}+\frac{5}{7}>0$,
contradicting the maximality of $T$.

The proof is completed.
\end{proof}

\begin{lemma}\label{lem2}
Let $T$ be a molecular tree which has maximum $DOS$ index among all molecular trees of order $n$ $(n\geq12)$ with a perfect matching $M$. If $d_T(u)=3$, then $d_T(v)\in \{1, 4\}$ for any $v\in N_T(u)$.
\end{lemma}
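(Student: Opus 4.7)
The plan is to argue by contradiction, in the same spirit as the proof of Lemma~\ref{lem1}: assume that some $v\in N_T(u)$ has $d_T(v)\in\{2,3\}$ and build a molecular tree $T'$ of the same order $n$ with a perfect matching such that $DSO(T')>DSO(T)$. By Lemma~\ref{lem1} applied to $u$ (which is non-pendant since $d_T(u)=3$), the mate $u_0$ of $u$ in $M$ is pendant; write $u_1$ for the third neighbor of $u$. Since $uu_0\in M$, the vertex $u_1$ cannot be pendant, so $d_T(u_1)\ge 2$; by Lemma~\ref{lem1} again, $u_1$ has a pendant mate $u_1'$. I would then split on $d_T(v)$.

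\textbf{Case $d_T(v)=3$.} Let $v_0$ be the pendant mate of $v$ and $v_1$ the remaining neighbor of $v$ (with $d_T(v_1)\ge 2$). I would take
\[
T'=T-uu_1+vu_1,
\]
which lowers $d(u)$ to $2$, raises $d(v)$ to $4$, keeps every other degree fixed, and preserves the matching $M$ since $uu_0$, $vv_0$, and the mate edge of $u_1$ all remain present. The difference $DSO(T')-DSO(T)$ decomposes into five edge-changes: the upgrades $uv:(3,3)\to(2,4)$, $vv_0:(3,1)\to(4,1)$, $vv_1:(3,d_{v_1})\to(4,d_{v_1})$, the swap $uu_1\to vu_1$ of type $(3,d_{u_1})\to(4,d_{u_1})$, and the downgrade $uu_0:(3,1)\to(2,1)$. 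Minimizing this expression over $(d_{u_1},d_{v_1})\in\{2,3,4\}^2$, with the minimum attained at $(4,4)$, one checks that the value is still strictly positive, contradicting the extremality of $T$.

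\textbf{Case $d_T(v)=2$.} Let $w$ be the pendant mate of $v$. The key observation is a vertex count: if $d_T(u_1)=2$, then $u_1$ is adjacent only to $u$ and a pendant, so $\{u,u_0,v,w,u_1,u_1'\}$ has no external edge and equals all of $T$, forcing $n=6$, contrary to $n\ge 12$. Hence $d_T(u_1)\in\{3,4\}$, and $u_1$ has at least one non-pendant neighbor besides $u,u_1'$, call it $u_2$. When $d_T(u_1)=4$ I write $u_3$ for the remaining non-pendant neighbor of $u_1$; the same counting rules out $d_T(u_2)=d_T(u_3)=2$ (which would give $n=10$), so at least one of $d_T(u_2),d_T(u_3)$ is $\ge 3$. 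Choosing $u_2$ to be a non-pendant neighbor of $u_1$ of \emph{smallest} degree, I apply the surgery
\[
T'=T-\{uu_1,u_1u_2\}+\{vu_1,vu_2\},
\]
which preserves the matching $M$ and yields a molecular tree of order $n$ with $d(v)=4$. The difference $DSO(T')-DSO(T)$ now sums the changes on $uu_0$, $uv$, $vw$, $uu_1$ (removed), $u_1u_1'$, $u_1u_2$ (removed), $vu_1$ (added), $vu_2$ (added), plus, if $d_T(u_1)=4$, the edge $u_1u_3$ whose type drops from $(4,d_{u_3})$ to $(2,d_{u_3})$. Minimizing over the admissible degrees, where the smallest-degree choice of $u_2$ ensures $d_{u_3}\ge 3$, produces a strictly positive value.

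The main obstacle is Case $d_T(v)=2$ with $d_T(u_1)=4$: the naive surgery fails precisely when the leftover neighbor $u_3$ has degree $2$, and the fix is a combination of (i) choosing to move the non-pendant neighbor of $u_1$ of smallest degree, and (ii) using $n\ge 12$ to rule out the sub-configuration $d_T(u_2)=d_T(u_3)=2$ via a local vertex count. Once these exclusions are in place, the required inequalities $DSO(T')-DSO(T)>0$ reduce to finite degree checks analogous to those carried out in the six subcases of Lemma~\ref{lem1}.
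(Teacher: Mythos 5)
Your proposal is correct, and it is worth comparing with the paper's proof because the two arguments diverge in the harder half. In the case $d_T(v)=3$ you perform exactly the paper's switch (their $T'=T-uu_2+vu_2$, turning the $(3,3)$ pair into a $(2,4)$ pair), and your claimed minimum at $(d_{u_1},d_{v_1})=(4,4)$ with value $\frac{\sqrt2-\sqrt{10}}{2}-\frac{10}{7}+\frac{\sqrt{17}}{5}+\frac{2\sqrt5}{3}\approx 0.013>0$ is precisely the paper's Case 1. In the case $d_T(v)=2$, however, the paper walks a maximal path through the degree-$4$ neighbour of $u$ to a distant branching vertex $z$ and reattaches $v$ (or subtrees of $z$) there, which costs them three further terminal subcases; you instead do a purely local surgery $T'=T-\{uu_1,u_1u_2\}+\{vu_1,vu_2\}$ at the degree-$4$ neighbour itself. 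I checked your accounting: with $d_T(u_1)=4$ the net change is $\bigl(f(2,4)-f(2,3)\bigr)+\bigl(f(3,d_{u_3})-f(4,d_{u_3})\bigr)$ where $f(a,b)=\frac{\sqrt{a^2+b^2}}{a+b}$, and this is \emph{exactly zero} when $d_{u_3}=2$ and strictly positive when $d_{u_3}\ge 3$ (about $0.017$). So your proof stands or falls on excluding $d_{u_3}=2$, and your two-step fix — move the smallest-degree non-pendant neighbour, and use Lemma 1 plus $n\ge 12$ to rule out both remaining neighbours of $u_1$ having degree $2$ (which forces $n=10$) — does exclude it; this is the genuine content of your argument and you identified it as such. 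Two minor remarks: your subcase $d_T(u_1)=3$ is vacuous, since the already-proved $d_T(v)=3$ case forbids a $(3,3)$ edge at $u$ (the paper uses this to jump straight to $d_T(u_2)=4$), though your direct computation for it happens to work anyway; and you should state explicitly that $u_2$'s matching edge is untouched so that $M$ survives. The trade-off is that your route avoids the maximal-path machinery and four subcases at the price of a razor-thin inequality that must be protected by the counting argument, whereas the paper's distant reattachment buys more comfortable margins.
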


\begin{proof} Suppose that $d_T(v)\in\{2, 3\}$. We consider two cases and will obtain a contradiction.

\vspace{2mm}\noindent{\bf Case 1.}  $d_T(v)=3$.

Let $N_T(u)\setminus \{v\}=\{u_1, u_2\}$ and $N_T(v)\setminus \{u\}=\{v_1, v_2\}$. Since $T$ has a perfect matching, by Lemma 1 $u$ is matched with one of its neighbor, say $u_1$, $v$ is matched with one of its neighbor, say $v_1$. Thus $d_T(u_1)=1$, $d_T(v_1)=1$, $2\leq d_T(u_2)\leq 4$ and $2\leq d_T(v_2)\leq4$.
Let $T^\prime=T-uu_2+vu_2$. Clearly, $T^\prime$ is also a molecular tree of order $n$ with a perfect matching. In the following, we will obtain a contradiction by showing $DSO(T^\prime)>DSO(T)$.

\begin{flalign}
&DSO(T^\prime)-DSO(T)\nonumber\\
&=\frac{\sqrt{d_{T^\prime}^2(u_2)+d_{T^\prime}^2(v)}}
{d_{T^\prime}(u_2)+d_{T^\prime}(v)}
-\frac{\sqrt{d_T^2(u_2)+d_T^2(u)}}{d_T(u_2)+d_T(u)}
+\frac{\sqrt{d_{T^\prime}^2(u)+d_{T^\prime}^2(u_1)}}
{d_{T^\prime}(u)+d_{T^\prime}(u_1)}\nonumber\\
&~~~~-\frac{\sqrt{d_T^2(u)+d_T^2(u_1)}}{d_T(u)+d_T(u_1)}
+\frac{\sqrt{d_{T^\prime}^2(u)+d_{T^\prime}^2(v)}}
{d_{T^\prime}(u)+d_{T^\prime}(v)}
-\frac{\sqrt{d_T^2(u)+d_T^2(v)}}{d_T(u)+d_T(v)}\nonumber \\
&~~~~+\frac{\sqrt{d_{T^\prime}^2(v)+d_{T^\prime}^2(v_1)}}
{d_{T^\prime}(v)+d_{T^\prime}(v_1)}
-\frac{\sqrt{d_T^2(v)+d_T^2(v_1)}}{d_T(v)+d_T(v_1)}
+\frac{\sqrt{d_{T^\prime}^2(v)+d_{T^\prime}^2(v_2)}}
{d_{T^\prime}(v)+d_{T^\prime}(v_2)}\nonumber\\
&~~~~-\frac{\sqrt{d_T^2(v)+d_T^2(v_2)}}{d_T(v)+d_T(v_2)}\nonumber\\
&=\frac{\sqrt{d_T^2(u_2)+16}}{d_T(u_2)+4}
-\frac{\sqrt{d_T^2(u_2)+9}}{d_T(u_2)+3}
+\frac{\sqrt{d_T^2(v_2)+16}}{d_T(v_2)+4}\nonumber\\
&~~~~-\frac{\sqrt{d_T^2(v_2)+9}}{d_T(v_2)+3}
+\frac{2\sqrt{5}}{3}
-\frac{\sqrt{10}+\sqrt{2}}{2}+\frac{\sqrt{17}}{5}.
\end{flalign}

Since $2\leq d_T(u_2)\leq4$ and $2\leq d_T(v_2)\leq4$ in the last equation of $(11)$, a simple calculation shows that the value of $(11)$ become minimum when $d_T(u_2)=d_T(v_2)=4$.
Thus $DSO(T^\prime)-DSO(T)=\frac{\sqrt{2}-\sqrt{10}}{2}-\frac{10}{7}
+\frac{\sqrt{17}}{5}+\frac{2\sqrt{5}}{3}>0$,
contradicting the maximality of $T$.

\vspace{2mm}\noindent{\bf Case 2.} $d_T(v)=2$

Let $N_T(u)\setminus \{v\}=\{u_1, u_2\}$ and $N_T(v)\setminus \{u\}=\{v_1\}$. Since $T$ has a perfect matching, by Lemma 1 $u$ is matched with one of its neighbor, say $u_1$, $v$ is matched with one of its neighbor, say $v_1$. Thus $d_T(u_1)=1$, $d_T(v_1)=1$. In view of Case 1 and the assumption that $n\geq 12$, we have $d_T(u_2)=4$.
Let $P$ be a maximal path of $T$ which starts from $u$ containing $uu_2$. Without loss of generality, suppose $x$ is another end-point of P. Obviously, $x$ is a pendant vertex. Let $y$ be the neighbor of $x$. Since $T$ has a perfect matching, we have $d_T(y)=2$. Let $z$ be another neighbor of $y$. Since $n\geq 12$, we have $u_2\neq z$.

\vspace{2mm}\noindent{\bf Case 2.1.} $d_T(z)=3.$

Let $N_T(z)\setminus \{y\}=\{z_1, z_2\}$. Since $T$ has a perfect matching, by Lemma 1 $z$ is matched with one of its neighbor, say $z_1$. Thus $d_T(z_1)=1$. In view of Case 1 and $n\geq 12$, we can assume that $d_T(z_2)=4$.
Let $T^\prime=T-uv+zv$. Clearly, $T^\prime$ is also a molecular tree of order $n$ with a perfect matching. In the following, we will obtain a contradiction by showing $DSO(T^\prime)>DSO(T)$.

\begin{flalign}
&DSO(T^\prime)-DSO(T)\nonumber\\
&=\frac{\sqrt{d_{T^\prime}^2(v)+d_{T^\prime}^2(z)}}
{d_{T^\prime}(v)+d_{T^\prime}(z)}
-\frac{\sqrt{d_T^2(v)+d_T^2(u)}}{d_T(v)+d_T(u)}
+\frac{\sqrt{d_{T^\prime}^2(u)+d_{T^\prime}^2(u_1)}}
{d_{T^\prime}(u)+d_{T^\prime}(u_1)}\nonumber\\
&~~~~-\frac{\sqrt{d_T^2(u)+d_T^2(u_1)}}{d_T(u)+d_T(u_1)}
+\frac{\sqrt{d_{T^\prime}^2(u)+d_{T^\prime}^2(u_2)}}
{d_{T^\prime}(u)+d_{T^\prime}(u_2)}
-\frac{\sqrt{d_T^2(u)+d_T^2(u_2)}}{d_T(u)+d_T(u_2)}\nonumber \\
&~~~~+\frac{\sqrt{d_{T^\prime}^2(z)+d_{T^\prime}^2(z_1)}}
{d_{T^\prime}(z)+d_{T^\prime}(z_1)}
-\frac{\sqrt{d_T^2(z)+d_T^2(z_1)}}{d_T(z)+d_T(z_1)}
+\frac{\sqrt{d_{T^\prime}^2(z)+d_{T^\prime}^2(y)}}
{d_{T^\prime}(z)+d_{T^\prime}(y)}\nonumber\\
&~~~~-\frac{\sqrt{d_T^2(z)+d_T^2(y)}}{d_T(z)+d_T(y)}
+\frac{\sqrt{d_{T^\prime}^2(z_2)+d_{T^\prime}^2(z)}}
{d_{T^\prime}(z_2)+d_{T^\prime}(z)}
-\frac{\sqrt{d_T^2(z_2)+d_T^2(z)}}{d_T(z_2)+d_T(z)}\nonumber\\
&=\frac{4\sqrt{5}}{3}
+\frac{\sqrt{17}-2\sqrt{13}}{5}
-\frac{10}{7}+\frac{\sqrt{2}-\sqrt{10}}{2}>0.\nonumber\\
\end{flalign}

\vspace{2mm}\noindent{\bf Case 2.2} $d_T(z)=4.$

Let $N_T(z)\setminus \{y\}=\{z_1, z_2, z_3\}$ Since $T$ has a perfect matching, by Lemma 1 $z$ is matched with one of its neighbor, say $z_1$. Thus $d_T(z_1)=1$ and $3\leq d_T(z_2)\leq4$. By the choice of $P$, we have $z_3=2$.

\vspace{2mm}\noindent{\bf Case 2.2.1.} $d_T(z)=4, d_T(z_2)=3.$

Let $N_T(z_2)\setminus \{z\}=\{z_4, z_5\}$. Since $T$ has a perfect matching, by Lemma 1 $z_2$ is matched with one of its neighbor, say $z_4$. Thus $d_T(z_4)=1$. In view of Case 1, we may assume that $3 \leq d_T(z_5)\leq 4$.
Let $T^\prime=T-uv+vz_2$. Clearly, $T^\prime$ is a molecular tree of order $n$ with a perfect matching. In the following, we will obtain a contradiction by showing $DSO(T^\prime)>DSO(T)$.

\begin{flalign}
&DSO(T^\prime)-DSO(T)\nonumber\\
&=\frac{\sqrt{d_{T^\prime}^2(z_2)+d_{T^\prime}^2(v)}}
{d_{T^\prime}(z_2)+d_{T^\prime}(v)}
-\frac{\sqrt{d_T^2(v)+d_T^2(u)}}{d_T(v)+d_T(u)}
+\frac{\sqrt{d_{T^\prime}^2(u)+d_{T^\prime}^2(u_1)}}
{d_{T^\prime}(u)+d_{T^\prime}(u_1)}\nonumber\\
&~~~~-\frac{\sqrt{d_T^2(u)+d_T^2(u_1)}}{d_T(u)+d_T(u_1)}
+\frac{\sqrt{d_{T^\prime}^2(u)+d_{T^\prime}^2(u_2)}}
{d_{T^\prime}(u)+d_{T^\prime}(u_2)}
-\frac{\sqrt{d_T^2(u)+d_T^2(u_2)}}{d_T(u)+d_T(u_2)}\nonumber \\
&~~~~+\frac{\sqrt{d_{T^\prime}^2(z)+d_{T^\prime}^2(z_2)}}
{d_{T^\prime}(z)+d_{T^\prime}(z_2)}
-\frac{\sqrt{d_T^2(z)+d_T^2(z_2)}}{d_T(z)+d_T(z_2)}
+\frac{\sqrt{d_{T^\prime}^2(z_2)+d_{T^\prime}^2(z_4)}}
{d_{T^\prime}(z_2)+d_{T^\prime}(z_4)}\nonumber\\
&~~~~-\frac{\sqrt{d_T^2(z_2)+d_T^2(z_4)}}{d_T(z_2)+d_T(z_4)}
+\frac{\sqrt{d_{T^\prime}^2(z_2)+d_{T^\prime}^2(z_5)}}
{d_{T^\prime}(z_2)+d_{T^\prime}(z_5)}
-\frac{\sqrt{d_T^2(z_2)+d_T^2(z_5)}}{d_T(z_2)+d_T(z_5)}\nonumber\\
&=\frac{\sqrt{d_T^2(z_5)+16}}{d_T(z_5)+4}
-\frac{\sqrt{d_T^2(z_5)+9}}{d_T(z_5)+3}
+\sqrt{5}-\frac{10}{7}+\frac{\sqrt{2}-\sqrt{10}}{2}\nonumber\\
&~~~~+\frac{\sqrt{17}-\sqrt{13}}{5}.
\end{flalign}

Since $3\leq d_T(z_5)\leq 4$, by taking each possible value 3, 4 into the last equation of (12), we find that the right side of $(12)$ achieve its minimum at $d_T(z_5)=4$. Thus $DSO(T^\prime)-DSO(T)=\frac{2\sqrt{2}-\sqrt{10}}{2}-\frac{15}{7}
+\sqrt{5}+\frac{\sqrt{17}-\sqrt{13}}{5}>0$,
contradicting the maximality of $T$.

\vspace{2mm}\noindent{\bf Case 2.2.2.} $d_T(z)=4, d_T(z_2)=4.$

Let $N_T(z_3)\setminus \{z\}=\{z_4\}$. Since $T$ has a perfect matching, by Lemma 1 $z_3$ is matched with one of its neighbor, say $z_4$. Thus $d_T(z_4)=1$.
Let $T^\prime=T-\{zz_3,zy\}+\{vz_3,vy\}$. Clearly, $T^\prime$ is also a molecular tree of order $n$ with a perfect matching. In the following, we will obtain a contradiction by showing $DSO(T^\prime)>DSO(T)$.

\begin{flalign}
&DSO(T^\prime)-DSO(T)\nonumber\\
&=\frac{\sqrt{d_{T^\prime}^2(v)+d_{T^\prime}^2(y)}}
{d_{T^\prime}(v)+d_{T^\prime}(y)}
-\frac{\sqrt{d_T^2(z)+d_T^2(y)}}{d_T(z)+d_T(y)}
+\frac{\sqrt{d_{T^\prime}^2(v)+d_{T^\prime}^2(z_3)}}
{d_{T^\prime}(v)+d_{T^\prime}(z_3)}\nonumber\\
&~~~~-\frac{\sqrt{d_T^2(z)+d_T^2(z_3)}}{d_T(z)+d_T(z_3)}
+\frac{\sqrt{d_{T^\prime}^2(v)+d_{T^\prime}^2(v_1)}}
{d_{T^\prime}(v)+d_{T^\prime}(v_1)}
-\frac{\sqrt{d_T^2(v)+d_T^2(v_1)}}{d_T(v)+d_T(v_1)}\nonumber \\
&~~~~+\frac{\sqrt{d_{T^\prime}^2(v)+d_{T^\prime}^2(u)}}
{d_{T^\prime}(v)+d_{T^\prime}(u)}
-\frac{\sqrt{d_T^2(v)+d_T^2(u)}}{d_T(v)+d_T(u)}
+\frac{\sqrt{d_{T^\prime}^2(z)+d_{T^\prime}^2(z_1)}}
{d_{T^\prime}(z)+d_{T^\prime}(z_1)}\nonumber\\
&~~~~-\frac{\sqrt{d_T^2(z)+d_T^2(z_1)}}{d_T(z)+d_T(z_1)}
+\frac{\sqrt{d_{T^\prime}^2(z)+d_{T^\prime}^2(z_2)}}
{d_{T^\prime}(z)+d_{T^\prime}(z_2)}
-\frac{\sqrt{d_T^2(z)+d_T^2(z_2)}}{d_T(z)+d_T(z_2)}\nonumber\\
&=\frac{\sqrt{5}}{3}-\frac{\sqrt{2}}{2}
+\frac{5}{7}-\frac{\sqrt{13}}{5}>0.\nonumber
\end{flalign}

The proof is completed.
\end{proof}

\begin{lemma}\label{lem3}
If a molecular tree $T$ has the maximum $DOS$ index among all molecular trees of order $n$ $(n\geq12)$ with a perfect matching $M$, then $e_{4,4}\leq2$ in $T$.
\end{lemma}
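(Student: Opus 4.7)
The plan is to argue by contradiction: assume the extremal tree $T$ satisfies $e_{4,4}(T)\ge 3$, and produce a molecular tree $T'$ of order $n$ with a perfect matching such that $DSO(T')>DSO(T)$, contradicting the maximality of $T$.

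First I would extract a useful local configuration. The $(4,4)$-edges of $T$ form a subforest of $T$, and having at least three of them forces one of two possibilities: either some component of the subforest has a vertex incident with two of the $(4,4)$-edges (so some degree-$4$ vertex $u$ of $T$ has two degree-$4$ neighbours $v,w$), or the three $(4,4)$-edges are pairwise vertex-disjoint (a matching among the degree-$4$ vertices). By Lemma~\ref{lem1} each of $u,v,w$ is matched to a pendant neighbour; denote these pendants by $u^{*},v^{*},w^{*}$. Writing $u_2$ for the fourth neighbour of $u$, Lemma~\ref{lem1} gives $d_T(u_2)\ge 2$, and Lemma~\ref{lem2} forbids $(2,3)$- and $(3,3)$-edges throughout $T$, so the local structure around $u$ is essentially determined by the value of $d_T(u_2)\in\{2,3,4\}$.

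In the branching case with $d_T(u_2)\le 3$ I would perform the swap
\[
T' \;=\; T-\{uv,\,uw\}+\{u^{*} v,\,u_2 w\}.
\]
Removing the bridges $uv$ and $uw$ splits $T$ into three components, and the two added edges reconnect them into a single tree. The updated degrees are $d_{T'}(u)=2$, $d_{T'}(u^{*})=2$, $d_{T'}(u_2)=d_T(u_2)+1\le 4$, and $d_{T'}(v)=d_{T'}(w)=4$, so $T'$ is a molecular tree; the perfect matching $M$ is unaffected since $uu^{*}\in E(T')$. When $d_T(u_2)=4$, the vertex $u$ is the star-centre of three $(4,4)$-edges, and the family of swaps already developed in Cases~3.1--3.6 of Lemma~\ref{lem1} (designed precisely for a degree-$4$ vertex adjacent to three degree-$4$ vertices) applies essentially verbatim. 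In the disjoint-matching case, one selects any one of the three $(4,4)$-edges and performs an analogous swap rerouted through the pendant partner of one endpoint, using Lemma~\ref{lem2} to control the nearby degrees.

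The concluding step is to compute $DSO(T')-DSO(T)$. Only the edges whose endpoint degrees change contribute, which is a bounded set near $u$ and $u_2$; the difference is a finite sum of terms of the form $\frac{\sqrt{a^2+b^2}}{a+b}$ with $a,b\in\{1,2,3,4\}$ constrained by Lemmas~\ref{lem1} and~\ref{lem2}. Enumerating the finitely many molecular possibilities for $d_T(u_2)$ and for the degrees of the remaining neighbours of $u_2$ should yield a strictly positive quantity in every subcase. The main obstacle will be the tightest subcase $d_T(u_2)=2$: as in Lemma~\ref{lem2}, the net change then becomes a delicate linear combination of terms of comparable size, and a refined transformation (rerouting through a deeper pendant in the subtree rooted at $u_2$, in the spirit of Cases~2.2.1--2.2.2 of Lemma~\ref{lem2}) may be required to guarantee a positive sign. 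Once positivity is verified in every subcase, $DSO(T')>DSO(T)$ contradicts the maximality of $T$, forcing $e_{4,4}(T)\le 2$.
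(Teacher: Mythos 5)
Your overall plan (contradiction via a degree-rearranging swap) is in the right spirit, but the central transformation you propose actually \emph{decreases} the $DSO$ index, so the proof does not go through. Write $f(a,b)=\frac{\sqrt{a^{2}+b^{2}}}{a+b}$. Over molecular degree pairs $f$ is maximized at $(1,4)$, with $f(1,4)=\frac{\sqrt{17}}{5}\approx 0.8246$, and minimized at balanced pairs, with $f(2,2)=f(4,4)=\frac{\sqrt{2}}{2}\approx 0.7071$. Your swap $T'=T-\{uv,uw\}+\{u^{*}v,u_{2}w\}$ converts the pendant matching edge $uu^{*}$ from type $(1,4)$ to type $(2,2)$, a loss of about $0.1175$, which is the largest possible per-edge loss; meanwhile each destroyed $(4,4)$-edge gains at most $f(2,4)-f(4,4)\approx 0.0382$. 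Doing the full bookkeeping: if $d_T(u_2)=2$ (so $u_2$ is matched to a pendant $t$ by Lemma~\ref{lem1}), the total change is $\bigl(f(2,4)-f(4,4)\bigr)+\bigl(f(3,4)-f(4,4)\bigr)+\bigl(f(2,2)-f(1,4)\bigr)+\bigl(f(2,3)-f(2,4)\bigr)+\bigl(f(1,3)-f(1,2)\bigr)\approx 0.038+0.007-0.118-0.024+0.045=-0.051<0$; if $d_T(u_2)=3$ one gets about $-0.021<0$. So $DSO(T')<DSO(T)$ and no contradiction is obtained. The deferred cases are also not secure: Cases 3.1--3.6 of Lemma~\ref{lem1} treat a vertex whose matching partner has degree at least $2$, whereas after Lemma~\ref{lem1} your $u$ is matched to a pendant, so those computations do not apply ``verbatim''; and the pairwise-disjoint case is only sketched.

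The paper's proof avoids exactly this trap: it never lowers the degree of any degree-$4$ vertex and never touches a $(1,4)$ pendant edge incident to one. Instead it goes to the far end of a maximal path, where Lemmas~\ref{lem1} and~\ref{lem2} guarantee a pocket of degree-$2$ vertices ($y$, $z_2$) and a vertex $z$ that can shed neighbours, and it reroutes each chosen $(4,4)$-edge $u_iu_j$ through one of these vertices (replacing $u_iu_j$ by $u_iz,\,u_jz$, etc.). Each $(4,4)$-edge of weight $\frac{\sqrt{2}}{2}$ then becomes two $(3,4)$-edges of weight $\frac{5}{7}$ each, a large net gain, and the only losses are mild ones of the form $(1,4)\to(1,3)$. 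To repair your argument you would need a transformation with this property --- one that splits the $(4,4)$-edges using existing low-degree vertices rather than by demoting a degree-$4$ vertex whose pendant edge is then sacrificed.
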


\begin{proof}

We prove by contradiction. Suppose that $e_{4,4}\geq3$. Let $u_1u_2, u_3u_4, u_5u_6\in E_{4,4}$. Let $P$ be a maximal path in $T$. Let $x$ be an end of $P$. Obviously, $x$ is a pendant vertex. Let $y$ be the neighbor of $x$. Since $T$ has a perfect matching, $d_T(y)=2$. Let $z$ be the neighbor of $y$ other than $x$. Let $N_T(z)\setminus \{y\}=\{z_1, z_2, z_3\}$.
Since $T$ has a perfect matching, by Lemma 1 $z$ is matched with one of its neighbor, say $z_1$. Thus $d_T(z_1)=1$.
Since $P$ is the maximal path, Lemma 1 and Lemma 2,  we have $d_T(z_2)=2$ and $3\leq d_T(z_3)\leq4$.

\vspace{2mm}\noindent{\bf Case 1.} $d_T(z_3)=3.$

Let $N_T(z_3)\setminus \{z\}=\{z_4, z_5\}$. Since $T$ has a perfect matching, by Lemma 1 $z_3$ is matched with one of its neighbor, say $z_4$. Thus $d_T(z_4)=1$. In view of Case 1 of Lemma 2, we may assume that $d_T(z_5)=4$. Let $T^\prime=T-\{u_1u_2,u_3u_4,u_5u_6,zy,zz_2,zz_3\}+\{u_1z,u_2z,u_3y,u_4y,u_5z_2,u_6z_2\}.$ Therefore,

\begin{flalign}
&DSO(T^\prime)-DSO(T)\nonumber\\
&=\frac{\sqrt{d_{T^\prime}^2(u_1)+d_{T^\prime}^2(z)}}
{d_{T^\prime}(u_1)+d_{T^\prime}(z)}
+\frac{\sqrt{d_{T^\prime}^2(u_2)+d_{T^\prime}^2(z)}}
{d_{T^\prime}(u_2)+d_{T^\prime}(z)}
+\frac{\sqrt{d_{T^\prime}^2(u_3)+d_{T^\prime}^2(y)}}
{d_{T^\prime}(u_3)+d_{T^\prime}(y)}\nonumber\\
&~~~~+\frac{\sqrt{d_{T^\prime}^2(y)+d_{T^\prime}^2(u_4)}}
{d_{T^\prime}(y)+d_{T^\prime}(u_4)}
+\frac{\sqrt{d_{T^\prime}^2(u_5)+d_{T^\prime}^2(z_2)}}
{d_{T^\prime}(u_5)+d_{T^\prime}(z_2)}
+\frac{\sqrt{d_{T^\prime}^2(u_6)+d_{T^\prime}^2(z_2)}}
{d_{T^\prime}(u_6)+d_{T^\prime}(z_2)}\nonumber \\
&~~~~-\frac{\sqrt{d_T^2(u_1)+d_T^2(u_2)}}{d_T(u_1)+d_T(u_2)}
-\frac{\sqrt{d_T^2(u_3)+d_T^2(u_4)}}{d_T(u_3)+d_T(u_4)}
-\frac{\sqrt{d_T^2(u_5)+d_T^2(u_6)}}{d_T(u_5)+d_T(u_6)}\nonumber\\
&~~~~-\frac{\sqrt{d_T^2(z)+d_T^2(y)}}{d_T(z)+d_T(y)}
-\frac{\sqrt{d_T^2(z)+d_T^2(z_2)}}{d_T(z)+d_T(z_2)}
-\frac{\sqrt{d_T^2(z)+d_T^2(z_3)}}{d_T(z)+d_T(z_3)}\nonumber\\
&~~~~+\frac{\sqrt{d_{T^\prime}^2(z)+d_{T^\prime}^2(z_1)}}
{d_{T^\prime}(z)+d_{T^\prime}(z_1)}
-\frac{\sqrt{d_T^2(z)+d_T^2(z_1)}}{d_T(z)+d_T(z_1)}
+\frac{\sqrt{d_{T^\prime}^2(z_3)+d_{T^\prime}^2(z_4)}}
{d_{T^\prime}(z_3)+d_{T^\prime}(z_4)}\nonumber\\
&~~~~-\frac{\sqrt{d_T^2(z_3)+d_T^2(z_4)}}{d_T(z_3)+d_T(z_4)}
+\frac{\sqrt{d_{T^\prime}^2(z_3)+d_{T^\prime}^2(z_5)}}
{d_{T^\prime}(z_3)+d_{T^\prime}(z_5)}
-\frac{\sqrt{d_T^2(z_3)+d_T^2(z_5)}}{d_T(z_3)+d_T(z_5)}\nonumber\\
&=\frac{\sqrt{20}}{7}-\frac{2\sqrt{5}}{3}
+\frac{\sqrt{10}-3\sqrt{2}}{2}-\frac{\sqrt{17}}{5}>0.\nonumber
\end{flalign}

\vspace{2mm}\noindent{\bf Case 2.} $d_T(z_3)=4.$

Let $N_T(z_3)\setminus \{z\}=\{z_4, z_5, z_6\}$. Since $T$ has a perfect matching, by Lemma 1 $z_3$ is matched with one of its neighbor, say $z_4$. Thus $d_T(z_4)=1$. $T^\prime=T-\{u_1u_2,u_3u_4,zy,zz_2\}+\{u_1z_2,u_2z_2,u_3y,u_4y\}.$

\begin{flalign}
&DSO(T^\prime)-DSO(T)\nonumber\\
&=\frac{\sqrt{d_{T^\prime}^2(u_1)+d_{T^\prime}^2(z_2)}}
{d_{T^\prime}(u_1)+d_{T^\prime}(z_2)}
+\frac{\sqrt{d_{T^\prime}^2(u_2)+d_{T^\prime}^2(z_2)}}
{d_{T^\prime}(u_2)+d_{T^\prime}(z_2)}
+\frac{\sqrt{d_{T^\prime}^2(u_3)+d_{T^\prime}^2(y)}}
{d_{T^\prime}(u_3)+d_{T^\prime}(y)}\nonumber\\
&~~~~+\frac{\sqrt{d_{T^\prime}^2(y)+d_{T^\prime}^2(u_4)}}
{d_{T^\prime}(y)+d_{T^\prime}(u_4)}
-\frac{\sqrt{d_T^2(u_1)+d_T^2(u_2)}}{d_T(u_1)+d_T(u_2)}
-\frac{\sqrt{d_T^2(u_3)+d_T^2(u_4)}}{d_T(u_3)+d_T(u_4)}\nonumber\\
&~~~~-\frac{\sqrt{d_T^2(z)+d_T^2(y)}}{d_T(z)+d_T(y)}
-\frac{\sqrt{d_T^2(z)+d_T^2(z_2)}}{d_T(z)+d_T(z_2)}
+\frac{\sqrt{d_{T^\prime}^2(z)+d_{T^\prime}^2(z_1)}}
{d_{T^\prime}(z)+d_{T^\prime}(z_1)}\nonumber \\
&~~~~-\frac{\sqrt{d_T^2(z)+d_T^2(z_1)}}{d_T(z)+d_T(z_1)}
+\frac{\sqrt{d_{T^\prime}^2(z_3)+d_{T^\prime}^2(z)}}
{d_{T^\prime}(z_3)+d_{T^\prime}(z)}
-\frac{\sqrt{d_T^2(z_3)+d_T^2(z)}}{d_T(z_3)+d_T(z)}\nonumber\\
&=\frac{\sqrt{20}}{7}-\frac{2\sqrt{5}}{3}
+\frac{\sqrt{10}-3\sqrt{2}}{2}-\frac{\sqrt{17}}{5}>0.\nonumber
\end{flalign}

This complete the proof of this lemma.
\end{proof}

Before presenting our main theorem, we define several classes of trees.
Let $\mathcal{T}_{1, 3}$ be the set of trees in which all vertices have degree 1 or 3.
One can see that for any tree $T\in \mathcal{T}_{1, 3}$ of order $n$, the number of pendant vertices is $\frac n 2+1$ and the number of vertices of degree 3 is $\frac n 2-1$. Three types of trees, denoted by $\mathcal{H}_i (i\in\{0, 1, 2\})$, are further obtained by $T\in \mathcal{T}_{1, 3}$ inserting some vertices as follows:

$\mathcal{H}_0=\{T: T$ is a tree obtained from $T'\in \mathcal{T}_{1, 3}$ inserting exactly one vertex into each edge with both ends having degree $3\}$.

$\mathcal{H}_1=\{T: T$ is a tree obtained from $T'\in \mathcal{T}_{1, 3}$ inserting exactly one vertex into all but one edge with both ends having degree $3\}$.

$\mathcal{H}_2=\{T: T$ is a tree obtained from $T'\in \mathcal{T}_{1, 3}$ inserting exactly one vertex into all edges but two with both ends having degree $3\}$.
By the above definitions, if $n$ is the order of a tree $T$, then
\resizebox{0.4\textwidth}{!}{
    $
    \begin{array}{l}

       	\begin{aligned}
      n=\left \{
\begin{array}{ll}
3k+7, &\mbox{ $T\in \mathcal{H}_0$ \ }\\
3k+6, & \mbox{ $T\in \mathcal{H}_1$\
}\\
3k+8, &\mbox{ $T\in \mathcal{H}_2$, \ }
\end{array}
\right.

       \end{aligned}
     \end{array}
$
}

where $k$ is a nonnegative integer. Furthermore, let
$\mathcal{G}_i=\{T: T$ is a tree obtained $T'\in \mathcal{H}_i$ by adding $V(T')$ new vertices each of which joined to one vertex of $T'\}$ for each $i\in \{0, 1, 2\}$. Figure 3 shows some element in $\mathcal{G}_0, \mathcal{G}_1, \mathcal{G}_2$, respectiely.

\begin{center}
\scalebox{0.6}{\includegraphics{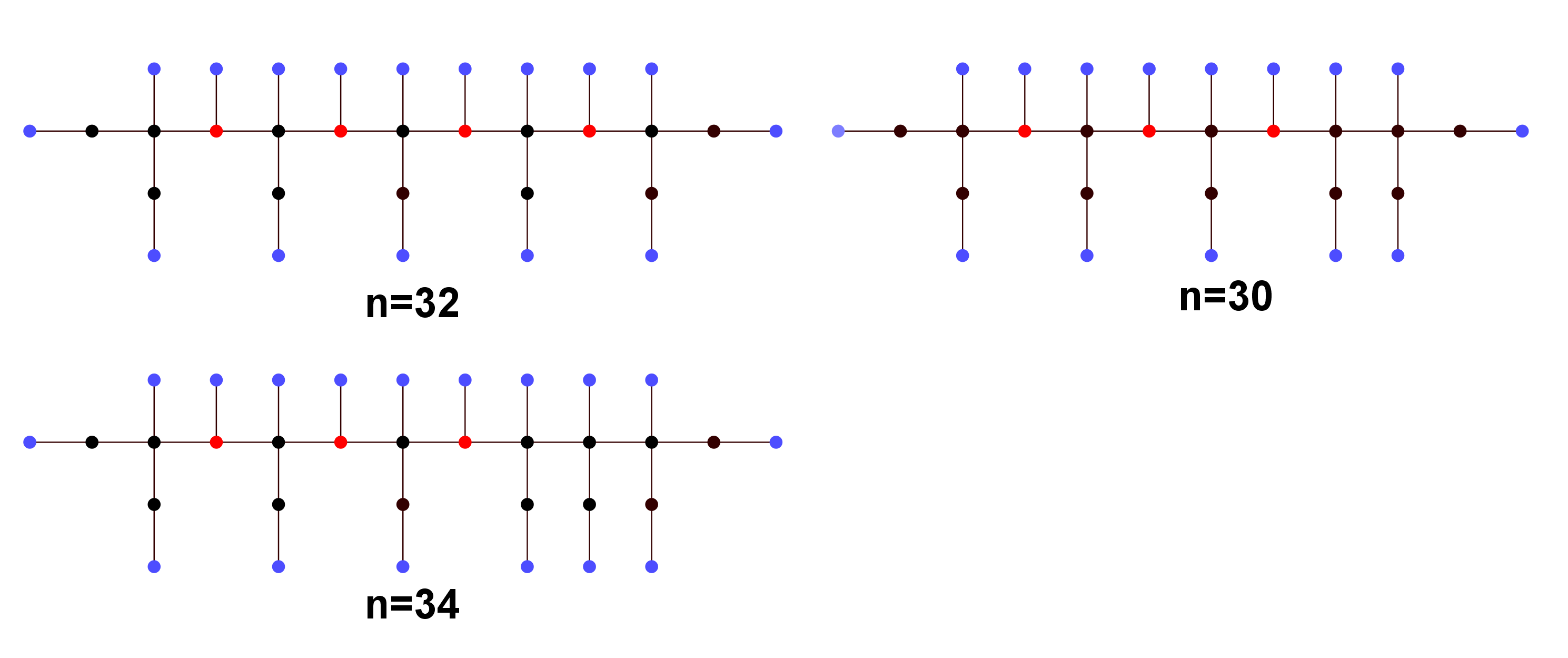}}\\
\vspace{0.1cm} \textbf{Figure 3.}
The tree with the maximum $DSO$ index of molecular tree of order $n$ $(n\geq12)$ with a perfect matching.
\end{center}

\begin{theorem}
If $T$ is a molecular tree of order $n$ $(n\geq12)$ with perfect matching $M$, then

\resizebox{1.1\textwidth}{!}{
    $
    \begin{array}{l}

       	\begin{aligned}
      DSO(T)\leq\left \{
\begin{array}{ll}
(\frac{\sqrt{17}}{30}+\frac{\sqrt{5}}{9}+\frac{\sqrt{10}}{24}\frac{5}{21})n
-\frac{\sqrt{17}}{15}+\frac{10\sqrt{5}}{9}-\frac{10}{3}-\frac{40}{21}, &\mbox{ $n-12 \equiv 2 (mod\ 6)$ \ }\\
(\frac{\sqrt{17}}{30}+\frac{\sqrt{5}}{9}+\frac{\sqrt{10}}{24}\frac{5}{21})n
+\frac{4\sqrt{5}}{3}-\frac{\sqrt{10}}{2}-\frac{20}{7}-\frac{\sqrt{2}}{2}, & \mbox{ $n-12 \equiv 0 (mod\ 6)$\
}\\
(\frac{\sqrt{17}}{30}+\frac{\sqrt{5}}{9}+\frac{\sqrt{10}}{24}\frac{5}{21})n
+\frac{\sqrt{17}}{15}+\frac{14\sqrt{5}}{9}-\frac{4\sqrt{10}}{6}-\frac{80}{21} -\sqrt{2}, &\mbox{ $n-12 \equiv 4 (mod\ 6)$. \ }
\end{array}
\right.

       \end{aligned}
     \end{array}
$
}

For $n-12 \equiv 2 (mod\ 6)$, the equality holds if and only if $T\in \mathcal{G}_0$; for $n-12 \equiv 0 (mod\ 6)$, the equality holds if and only if $T\in \mathcal{G}_1$; for $n-12 \equiv 4 (mod\ 6)$, the equality holds if and only if $T\in \mathcal{G}_2$.
\end{theorem}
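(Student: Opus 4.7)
My plan is to turn the maximisation into a two-parameter integer linear program in $(n_4,e_{4,4})$ using Lemmas~\ref{lem1}--\ref{lem3}, and then identify the optimizers structurally with the classes $\mathcal G_0,\mathcal G_1,\mathcal G_2$.

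First, I would express every edge multiplicity of an extremal $T$ as an affine function of $n_4$ and $e_{4,4}$. Lemma~\ref{lem1} forces every edge of $M$ to be pendant, so each non-pendant vertex has exactly one matched pendant neighbour, giving $n_1=n/2$ and $e_{1,d}=n_d$ for $d\in\{2,3,4\}$. Lemma~\ref{lem2} gives $e_{2,3}=e_{3,3}=0$ and forces both non-pendant neighbours of any degree-$3$ vertex to have degree $4$, so $e_{3,4}=2n_3$. The handshake identity together with $\sum_d n_d=n$ gives $n_2=n_4+2$ and $n_3=n/2-2-2n_4$, and counting incidences at degree-$4$ and degree-$2$ vertices yields
\[
e_{2,4}=7n_4-n+4-2e_{4,4},\qquad e_{2,2}=e_{4,4}-3n_4+n/2-1,
\]
so $e_{2,2}\ge 0$ becomes $e_{4,4}\ge 3n_4-n/2+1$, while Lemma~\ref{lem3} supplies the upper bound $e_{4,4}\le 2$.

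Second, with $\phi(i,j):=\sqrt{i^2+j^2}/(i+j)$, substituting these counts into the definition of $DSO$ collapses it to
\[
DSO(T)=L(n)+B\,n_4+C\,e_{4,4},
\]
where $L(n)$ is affine in $n$, $B=\tfrac{8\sqrt{5}}{3}+\tfrac{\sqrt{17}}{5}-\tfrac{\sqrt{10}}{2}-\tfrac{3\sqrt{2}}{2}-\tfrac{20}{7}>0$ and $C=\sqrt{2}-\tfrac{2\sqrt{5}}{3}<0$. I would then resolve the integer optimisation by splitting on $n\bmod 6\in\{0,2,4\}$. For $n\equiv 2\pmod 6$ the value $n_4=(n-2)/6$ is an integer at which $e_{4,4}=0$ is admissible, and any larger $n_4$ forces $e_{4,4}\ge 3$, excluded by Lemma~\ref{lem3}. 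For $n\equiv 0\pmod 6$ the two admissible candidates are $(n_4,e_{4,4})=((n-6)/6,0)$ and $(n/6,1)$; the latter wins because $B+C>0$. For $n\equiv 4\pmod 6$ they are $((n-4)/6,0)$ and $((n+2)/6,2)$; the latter wins because $B+2C>0$, and no larger $n_4$ is admissible. Plugging the winning pair $(n_4,e_{4,4})$ into $L(n)+B n_4+C e_{4,4}$ produces the three closed-form upper bounds in the statement.

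Third, for the equality characterisation I would reconstruct $T$ from a tree in $\mathcal T_{1,3}$ by the operations that define $\mathcal G_i$. Let $T^*$ be the subtree of $T$ induced on its non-pendant vertices; it has $n/2$ vertices with degree sequence $(n_2,n_3,n_4)$ in degrees $(1,2,3)$, and by Lemmas~\ref{lem1}--\ref{lem2} every $T^*$-pendant and $T^*$-degree-$2$ vertex has all its $T^*$-neighbours of degree $3$. Contracting the $n_3$ degree-$2$ vertices of $T^*$ therefore yields a tree $\tilde T\in\mathcal T_{1,3}$, and $T^*$ is recovered from $\tilde T$ by subdividing $n_3$ of its internal edges, the remaining $e_{4,4}$ internal edges being precisely the $E_{4,4}$-edges of $T$; re-attaching one pendant at each vertex of $T^*$ (its $M$-partner) then recovers $T$. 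For the optimal values $e_{4,4}=0,1,2$ produced by the second step, this is precisely the chain $\mathcal T_{1,3}\rightsquigarrow \mathcal H_i\rightsquigarrow \mathcal G_i$ with $i=0,1,2$, so for $n-12\equiv 2,0,4\pmod 6$ the extremal trees are exactly those in $\mathcal G_0,\mathcal G_1,\mathcal G_2$. The main obstacle will be the discrete optimisation of the second step: $B$ and $C$ are incommensurable radicals with $B+3C$ numerically very close to zero, so the pairwise comparisons between nearby candidate optima must be justified by sharp algebraic inequalities rather than by decimal estimates.
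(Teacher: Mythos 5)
Your plan is sound and would yield the theorem, but it solves a harder optimisation problem than the paper actually faces. After Lemmas 1--3 the paper observes (implicitly, in its edge-count identity) that $e_{2,2}=0$ as well: two adjacent degree-$2$ vertices would each carry a matched pendant neighbour, forcing $T\cong P_4$, which is impossible for $n\ge 12$. Hence every edge of the extremal tree lies in one of the six classes $E_{1,2},E_{1,3},E_{1,4},E_{2,4},E_{3,4},E_{4,4}$, and summing $e_{i,j}$ to $n-1$ gives the exact identity $e_{4,4}=3n_4-\tfrac n2+1$ rather than your one-sided constraint $e_{4,4}\ge 3n_4-\tfrac n2+1$. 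Combined with $0\le e_{4,4}\le 2$ (Lemma 3) and the resulting congruence $e_{4,4}\equiv 1-\tfrac n2 \pmod 3$, this pins down $e_{4,4}$, hence $n_4$ and every $e_{i,j}$, uniquely for each residue of $n$ modulo $6$: there is no optimisation left to perform and no sign of any radical combination to determine. Your LP relaxation instead admits extra candidates with $e_{2,2}>0$ (for instance $(n_4,e_{4,4})=(\tfrac{n-6}{6},0)$ when $n\equiv 0 \pmod 6$), none of which is realisable by a tree, and eliminating them costs you the comparisons $B>0$, $B+C>0$, $B+2C>0$ --- all true, but delicate to certify rigorously, as you yourself note given that $B+3C<0$ shows how thin the margins are. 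Inserting the observation $e_{2,2}=0$ collapses your second step to the paper's parity argument and removes the numerical risk entirely. Your third step (contracting the degree-$2$ vertices of the non-pendant subtree to land in $\mathcal T_{1,3}$ and reading off $e_{4,4}=i$ as the number of unsubdivided internal edges) is a more careful version of the paper's one-line ``by the definition, $T^*\in\mathcal G_i$'' and is worth keeping.
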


\begin{proof}
Let $T^*$ be a molecular tree which has maximum $DOS$ index with a perfect matching $M$. Let $n_i$ be the number of vertices of degree $i$ in $T^*$ for each $i\in\{1, 2, 3, 4\}$. Clearly,
 \begin{flalign}
&n_1+n_2+n_3+n_4=n
\end{flalign}
and
\begin{flalign}
&n_1+2n_2+3n_3+4n_4=2m=2(n-1).
\end{flalign}

By Lemma 1 and Lemma 2, we have
\begin{flalign}
&e_{1,2}+e_{1,3}+e_{1,4}+e_{2,4}+e_{3,4}+e_{4,4}=n-1\nonumber\\
&~~~~~~~~~2n_2+3n_3+n_4+e_{4,4}=n-1.
\end{flalign}

By Lemma 1, we have $n_1=\frac{n}{2}.$
Combining (14), (15) and (16), we have $n_3=-2n_4+\frac{n}{2}-2, n_2=n_4+2, e_{4,4}=3n_4-\frac{n}{2}+1.$

Thus
\begin{flalign}
&DSO(T)=\sum_{uv\in E_{1,2}}f(uv)+\sum_{uv\in E_{1,3}}f(uv)+\sum_{uv\in E_{1,4}}f(uv)+\sum_{uv\in E_{2,4}}f(uv)\nonumber\\
&~~~~~~~~~~~~~~+\sum_{uv\in E_{3,4}}f(uv)+\sum_{uv\in E_{4,4}}f(uv),\nonumber
\end{flalign}
where $f(uv)=\frac{\sqrt{d_G^2(u)+d_G^2(v)}}{d_G(u)+d_G(v)}$.
We consider three cases in terms of $n$.

\vspace{2mm}\noindent{\bf Case 1.} $n=6k+14,$ for an integer $k\geq0.$

Combining $n=6k+14,$ for an integer $k\geq0,$ we have $e_{4,4}=3(n_4-k)-6.$ By Lemma 3, we have $e_{4,4}=0,1,2$. When $e_{4,4}=1,2$, it contradicts that $k$ is an integer. Therefore, $e_{4,4}=0.$ Thus, $n_4=k+2, n_2=k+4, n_3=k+1.$ By Lemma 1 and Lemma 2, $e_{2,4}=k+4, e_{3,4}=2k+2, e_{1,2}=k+4, e_{1,3}=k+1, e_{1,4}=k+2$. By the definition, $T^*\in \mathcal{G}_0$. Moreover,

\begin{flalign}
&DSO(T^*)\nonumber\\
&=\sum_{uv\in E_{1,2}}f(uv)+\sum_{uv\in E_{1,3}}f(uv)+\sum_{uv\in E_{1,4}}f(uv)+\sum_{uv\in E_{2,4}}f(uv)\nonumber\\
&~~~+\sum_{uv\in E_{3,4}}f(uv)+\sum_{uv\in E_{4,4}}f(uv)\nonumber\\
&=(\frac{2\sqrt{5}}{3}+\frac{\sqrt{10}}{4}+\frac{\sqrt{17}}{5}+\frac{10}{7})k
+\frac{\sqrt{5}}{3}+\frac{2\sqrt{17}}{5}
+\frac{\sqrt{10}}{4}+\frac{10}{7}\nonumber\\
&=(\frac{\sqrt{17}}{30}+\frac{\sqrt{5}}{9}+\frac{\sqrt{10}}{24}+\frac{5}{21})n
+\frac{10\sqrt{5}}{9}-\frac{\sqrt{17}}{15}-\frac{\sqrt{10}}{3}-\frac{40}{21}.\nonumber
\end{flalign}

Thus,$$DSO(T)\leq(\frac{\sqrt{17}}{30}+\frac{\sqrt{5}}{9}+\frac{\sqrt{10}}{24}+\frac{5}{21})n
+\frac{10\sqrt{5}}{9}-\frac{\sqrt{17}}{15}-\frac{\sqrt{10}}{3}-\frac{40}{21}.$$

\vspace{2mm}\noindent{\bf Case 2.} $n=6k+12,$ for an integer $k\geq0.$

Combining $n=6k+12$, for an integer $k\geq0,$ we have $e_{4,4}=3(n_4-k)-5.$ By Lemma 3, we have $e_{4,4}=0,1,2$. When $e_{4,4}=0,2$, it contradicts that $k$ is an integer. Therefore, $e_{4,4}=1.$ Thus, $n_4=k+2, n_2=k+4, n_3=k.$ By Lemma 1 and Lemma 2, $e_{2,4}=k+4, e_{3,4}=2k, e_{1,2}=k+4, e_{1,3}=k, e_{1,4}=k+2$. By the definition, $T^*\in \mathcal{G}_1$. Moreover,

\begin{flalign}
&DSO(T^*)\nonumber\\
&=\sum_{uv\in E_{1,2}}f(uv)+\sum_{uv\in E_{1,3}}f(uv)+\sum_{uv\in E_{1,4}}f(uv)+\sum_{uv\in E_{2,4}}f(uv)\nonumber\\
&~~~+\sum_{uv\in E_{3,4}}f(uv)+\sum_{uv\in E_{4,4}}f(uv)\nonumber\\
&=(\frac{2\sqrt{5}}{3}+\frac{\sqrt{10}}{4}+\frac{\sqrt{17}}{5}+\frac{10}{7})k
+\frac{8\sqrt{5}}{3}+\frac{2\sqrt{17}}{5}+\frac{\sqrt{2}}{2}\nonumber\\
&=(\frac{\sqrt{17}}{30}+\frac{\sqrt{5}}{9}+\frac{\sqrt{10}}{24}+\frac{5}{21})n
+\frac{4\sqrt{5}}{3}-\frac{\sqrt{10}}{2}-\frac{20}{7}+\frac{\sqrt{2}}{2}.\nonumber
\end{flalign}

Thus,$$DSO(T)\leq(\frac{\sqrt{17}}{30}+\frac{\sqrt{5}}{9}+\frac{\sqrt{10}}{24}+\frac{5}{21})n
+\frac{4\sqrt{5}}{3}-\frac{\sqrt{10}}{2}-\frac{20}{7}+\frac{\sqrt{2}}{2}.$$

\vspace{2mm}\noindent{\bf Case 3.} $n=6k+16,$ for an integer $k\geq0.$

Combining $n=6k+16,$ for an integer $k\geq0,$ we have $e_{4,4}=3(n_4-k)-7.$ By Lemma 3, we have $e_{4,4}=0,1,2$. When $e_{4,4}=0,1$, it contradicts that $k$ is an integer. Therefore, $e_{4,4}=2.$ Thus, $n_4=k+3, n_2=k+5, n_3=k.$ By Lemma 1 and Lemma 2, $e_{2,4}=k+5, e_{3,4}=2k, e_{1,2}=k+5, e_{1,3}=k, e_{1,4}=k+3$. By the definition, $T^*\in \mathcal{G}_2$. Moreover,

\begin{flalign}
&DSO(T^*)\nonumber\\
&=\sum_{uv\in E_{1,2}}f(uv)+\sum_{uv\in E_{1,3}}f(uv)+\sum_{uv\in E_{1,4}}f(uv)+\sum_{uv\in E_{2,4}}f(uv)\nonumber\\
&~~~+\sum_{uv\in E_{3,4}}f(uv)+\sum_{uv\in E_{4,4}}f(uv)\nonumber\\
&=(\frac{2\sqrt{5}}{3}+\frac{\sqrt{10}}{4}+\frac{\sqrt{17}}{5}+\frac{10}{7})k
+\frac{10\sqrt{5}}{3}+\frac{3\sqrt{17}}{5}+\sqrt{2}\nonumber\\
&=(\frac{\sqrt{17}}{30}+\frac{\sqrt{5}}{9}+\frac{\sqrt{10}}{24}+\frac{5}{21})n
+\frac{\sqrt{17}}{15}+\frac{14\sqrt{5}}{9}-\frac{2\sqrt{10}}{3}-\frac{\sqrt{80}}{21}+\sqrt{2}.\nonumber
\end{flalign}

Thus,$$DSO(T)\leq(\frac{\sqrt{17}}{30}+\frac{\sqrt{5}}{9}+\frac{\sqrt{10}}{24}+\frac{5}{21})n
+\frac{\sqrt{17}}{15}+\frac{14\sqrt{5}}{9}-\frac{2\sqrt{10}}{3}-\frac{\sqrt{80}}{21}+\sqrt{2}.$$

Summing up the above, we complete the proof.
\end{proof}

\vspace{2mm}\noindent{\bf Declarations of competing interest}

The authors declare that have no known competing financial interests or personal relationships that could have appeared to influence the work reported in this paper.


\begin{thebibliography}{}

\bibitem{A.Ali2022}
A. Ali, B. Furtula, I. Red\u{z}epov\'{i}c, I. Gutman, Atom-bond sumconnectivity index, J. Math. Chem. \textbf{60}, 2081-2093 (2022)

\bibitem{J.A.Bondy2008}
J.A. Bondy, U.S.R. Murty, Graph Theory, Graduate Texts in Mathematics, vol. \textbf{244}, Springer, Heidelberg, (2008)

\bibitem{A.K2020}
S.C. Basak, A.K. Bhattacharjee, Computational Approaches for the Design of Mosquito Repellent Chemicals. Curr. Med. Chem. \textbf{27}, 32-41 (2020)

\bibitem{M.G2020}
S.C. Basak, M.G. Vracko, Parsimony Principle and its Proper use/ Application in Computer-assisted Drug Design and QSAR. Curr. Comput. Aided Drug Des. \textbf{16}, 1-5 (2020)

\bibitem{R.Cruz2021}
R. Cruz, I. Gutman, J. Rada, Sombor index of chemical graphs, Appl. Math. Comput. \textbf{399}, 126018 (2021)

\bibitem{J.W.Du2024}
J. W. Du, X. L. Su, On bond incident degree index of chemical trees with a fixed order and a fixed number of leaves, Appl. Math. Comput. \textbf{464}, 128390 (2024)

\bibitem{H.Y.Deng2021}
H. Y. Deng, Z. K. Tang, R. F. Wu, Molecular trees with extremal values of Sombor indices, Int. J. Quantum Chem. \textbf{121}, e26622 (2021)

\bibitem{I.Gutman2021}
I. Gutman, Geometric approach to degree-based topological indices: Sombor indices, MATCH Commun. Math. Comput. Chem. \textbf{86}, 11-16 (2021)

\bibitem{I.Gutman2014}
I. Gutman, Degree-based topological indices, Croat. Chem. Acta \textbf{86(4)}, 351-361 (2014)

\bibitem{I.Gutman2018}
I. Gutman, E. Milovanovi\'{c}, I. Milovanovi\'{c}, Beyond the Zagreb indices. AKCE Int. J. Graphs Comb. \textbf{17}, 74-85 (2018)

\bibitem{Gutman2013}
I. Gutman, J. T\u{o}sov\u{i}c , Testing the quality of molecular structure descriptors: vertex-degree-based topological indices. J. Serb. Chem. Soc. \textbf{78}, 805-810 (2013)

\bibitem{H.Liu2022}
H. Liu, I. Gutman, L. You, Y. Huang, Sombor index: review of extremal results and bounds, J. Math. Chem. \textbf{60}, 771-798 (2022)

\bibitem{F.Movahedi2023}
F. Movahedi, M.H. Akhbari, Degree-based topological indices of the molecular structure of hyaluronic acid-methotrexate conjugates in cancer treatment, Int. J. Quantum Chem. vol.123 \textbf{7}, e27106 (2023)

\bibitem{F.Movahedi2025}
F. Movahedi, Maximal Diminished Sombor index of tricyclic graphs, MATCH Commun. Math. Comput. Chem. in press, (2025)

\bibitem{F.Monahedi2025}
F. Movahedi, Diminished Sombor index and its relationship with topological indices, submitted, (2025)

\bibitem{F.Movahedi2026}
F. Movahedi, I. Gutman, I. Red\u{z}epov\'{i}c, B. Furtula, Diminished Sombor index, MATCH Commun. Math. Comput. Chem. \textbf{95}, 141-162 (2026)

\bibitem{I2021}
I. Red\v{z}epovi\'{c}, Chemical applicability of Sombor indices, J. Serb. Chem. Soc. (2021)

\bibitem{D.T.Rajathagiri2021}
D. T. Rajathagiri, Enhanced mathematical models for the Sombor index: Reduced and co-Sombor index perspectives, Data Anal. Artif. Intell. \textbf{1(2)}, 215-228 (2021)

\bibitem{R. Todeschini2009}
R. Todeschini, V. Consonni, Molecular Descriptors for Chemoinformatics (Wiley-VCH, Weinheim, (2009)

\bibitem{H.Winener1947}
H. Wiener, Structural determination of paraffin boiling points. J. Am. Chem. Soc. \textbf{69}, 17-20 (1947)

\bibitem{S.Wagner2009}
S. Wagner, H. Wang, G. Yu, Molecular graphs and the inverse Wiener index problem, Discrete Appl. Math. \textbf{157}, 1544-1554 (2009)

\bibitem{F.X.Wang2022}
F. X. Wang, B. Wu, The reduced Sombor index and the exponential reduced Sombor index of a molecular tree, J. Math. Anal. Appl. \textbf{515}, 126442 (2022)

\bibitem{F.X.Wang2024}
F. X. Wang, Y. Y. Zhang, X. Y. Ren, B. Zhang, Maximum Atom Bond Sum Connectivity Index of Molecular Trees with a Perfect Matching, MATCH Commun. Math. Comput. Chem. \textbf{92}, 653-669 (2024)
\end{thebibliography}
\end{document}